\newtheorem{theorem}{Theorem}
\newtheorem{assumption}{Assumption}
\newtheorem{lemma}{Lemma}
\newtheorem{proposition}{Proposition}
\newtheorem{definition}{Definition}
\newcommand{\mc}{\mathcal}
\newcommand{\R}{{\mathbb R}}
\newcommand{\diag}{{\rm diag}}
\newcommand{\lam}{\lambda}
\newcommand{\argmin}{\mathop{\rm arg~min}\limits}
\newcommand{\red}[1]{\textcolor{red}{#1}}
\begin{document}
\title{Collision Avoidance for Ellipsoidal Rigid Bodies\\ with Control Barrier Functions Designed from\\Rotating Supporting Hyperplanes}

\author{Riku Funada$^{1}$, Koju Nishimoto$^{1}$, Tatsuya Ibuki$^{2}$, and Mitsuji Sampei$^{1}$
\thanks{*This work is supported by JSPS KAKENHI Grant Number 22K14275. (Corresponding Author: Riku Funada)}
\thanks{$^{1}$R. Funada, K. Nishimoto, and M. Sampei are with the Department of Systems and Control Engineering, Tokyo Institute of Technology, Tokyo 152-8550, Japan
{\tt\footnotesize \{\href{mailto:funada@sc.e.titech.ac.jp}{funada},\href{mailto:sampei@sc.e.titech.ac.jp}{sampei}\}@sc.e.titech.ac.jp},
\href{mailto:nishimoto@sl.sc.e.titech.ac.jp} {\tt\footnotesize {nishimoto}@sl.sc.e.titech.ac.jp}}
\thanks{$^{2}$T. Ibuki is with the Department of Electronics and Bioinformatics, Meiji University, Kanagawa 214-8571, Japan
\href{mailto:ibuki@meiji.ac.jp} {\tt\footnotesize {ibuki}@meiji.ac.jp}}
}
\maketitle
\IEEEpeerreviewmaketitle

\begin{abstract}
This paper proposes a collision avoidance method for ellipsoidal rigid bodies, which utilizes a control barrier function (CBF) designed from a supporting hyperplane. We formulate the problem in the Special Euclidean Group $SE(2)$ and $SE(3)$, where the dynamics are described as rigid body motion (RBM). Then, we consider the condition for separating two ellipsoidal rigid bodies by employing a signed distance from a supporting hyperplane of a rigid body to the other rigid body. Although the positive value of this signed distance implies that two rigid bodies are collision-free, a naively prepared supporting hyperplane yields a smaller value than the actual distance. To avoid such a conservative evaluation, the supporting hyperplane is rotated so that the signed distance from the supporting hyperplane to the other rigid body is maximized. We prove that the maximum value of this optimization problem is equal to the actual distance between two ellipsoidal rigid bodies, hence eliminating excessive conservativeness.
We leverage this signed distance as a CBF to prevent collision while the supporting hyperplane is rotated via a gradient-based input. 
The designed CBF is integrated into a quadratic programming (QP) problem, where each rigid body calculates its collision-free input in a distributed manner, given communication among rigid bodies.
The proposed method is demonstrated with simulations. Finally, we exemplify our method can be extended to a vehicle having nonholonomic dynamics.
\end{abstract}

\begin{IEEEkeywords}
Collision avoidance, Constrained control, Mobile robotics, Cooperative control 
\end{IEEEkeywords}

\section{Introduction} \label{sec:intro}

Collision avoidance is one of the fundamental requirements for ensuring the safe operation of multi-robot systems in many application fields, including precision agriculture \cite{Zhang12}, autonomous transportation \cite{Miyano20, Gong17}, and environmental monitoring \cite{Funada20}. In such challenging and complex domains, it is paramount important to integrate robots having different capabilities, sizes, and shapes into a system to complete the task \cite{Rizk19}. In order to embrace such heterogeneity, collision avoidance methods are required to guide robots not to collide with each other while considering their shapes.

Real-time collision avoidance has been actively explored in the fields of multi-robot systems, but most work approximates the shape of a robot as a circle or sphere, as will be discussed in Section~\ref{ssec:Related}.
Although such approaches can be utilized for any robots by overestimating the original form of robots to a sphere enclosing them, this approach could result in too conservative evasion if they have a nonspherical, especially elongated, body.
To model such heterogeneous shapes with a higher fidelity model and still a small set of parameters, an ellipsoidal approximation of the shape has been utilized in SLAM and path planning fields.

The work \cite{Preiss17} proposes the path planning method, where each quadcopter is modeled as an ellipsoid to account for a downwash wind they generate. 
The approach to model a rigid body as an ellipsoid is also suitable for many SLAM or environmental monitoring techniques, e.g., \cite{Hatanaka2016,Nicholson2019,Ok2019}, where these works represent the objects 
as an ellipsoid and infer their sizes from visual information.






Despite the advantages of modeling the shape of rigid bodies as ellipsoids, the difficulties in deriving the distance between two separated ellipsoids hinder the development of collision avoidance controllers for ellipsoidal rigid bodies. 
In general, the analytical solutions of the distance between two ellipses/ellipsoids are difficult to obtain in simple analytical form, and the algorithms calculating the numerical solutions of the distance are utilized in the computational science field instead \cite{Choi2020, GIRAULT22}. 
For two-dimensional ellipses, the method in \cite{Choi2005} detects whether two ellipses overlap by evaluating the discriminant of a cubic characteristic equation. 
However, because the value of the discriminant is not necessarily in a proportional relationship with the distance, rigid bodies could take unreasonable motion if we employ it in a collision avoidance method, as will be shown in the comparative studies with the work~\cite{Ibuki22} later. 
In addition, it is difficult to extend this method to ellipsoids in 3D environments because one is required to solve the discriminant of higher dimensional characteristic equations, which solution is divided by case. 






This paper presents a collision avoidance method for elliptical/ellipsoidal rigid bodies in 2D/3D environments shown in Fig.~\ref{fig:scenario}, the dynamics of which are described by rigid body motion (RBM). To circumvent the difficulty of directly deriving analytical form of the distance between two ellipsoids, this paper proposes a novel CBF that utilizes a signed distance from a supporting hyperplane of an ellipsoid to the other ellipsoid, as depicted in Fig.~\ref{fig:ellipse_rot}. Because a naively prepared supporting hyperplane could render a smaller distance than the actual distance between two ellipsoidal rigid bodies (Fig.~\ref{fig:ellipse_rot}\subref{subfig:ellipse_rot_1}), we design a gradient-ascent-based update law, where the supporting hyperplane is rotated so that the signed distance from the supporting hyperplane to the other ellipsoidal rigid body is maximized (Fig.~\ref{fig:ellipse_rot}\subref{subfig:ellipse_rot_2}). We prove that the maximum signed distance derived from this optimization problem is equal to the actual distance between two ellipsoidal rigid bodies. A novel CBF integrating the rotating supporting hyperplane is incorporated into the quadratic programming (QP) problem, which allows a distributed computation under communication. The simulation studies demonstrate that the proposed method can successfully achieve collision avoidance for elliptical/ellipsoidal rigid bodies in 2D/3D environments. Finally, we exemplify the proposed method can be extended to a vehicle having nonholonomic dynamics.


\begin{figure} 
    \centering
    \includegraphics[width=50mm]{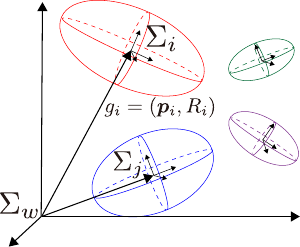}
    \caption{Proposed scenario. The rigid bodies characterized as ellipses or ellipsoids with heterogeneous shapes avoid collisions with each other.}
    \label{fig:scenario}
\end{figure}

\subsection{Related Work} \label{ssec:Related}

Among the various methodologies for preventing collisions, artificial potential fields (APFs) are one of the most traditional methods, which were first presented in \cite{Khatib1986}. 
APFs have been utilized for multi-robot (multi-agent) systems to attain cooperative behaviors \cite{Hoy2015}, such as flocking \cite{Olfari-saber06, Gazi04} and formation control \cite{Dimarogonas08}.
As a repulsive potential field is a key element for designing collision avoidance behaviors, several functions are proposed, including \cite{Stipanovic07}, which is activated only when any other robots is in the sensing region of the robot.
The work \cite{Arslan19} designs a vector field for navigating a circular robot in an environment cluttered with convex obstacles by utilizing the idea of Voronoi diagrams. 
Still, such a Voronoi-based approach cannot be readily extendable for the robots with ellipsoidal shapes because it is known to be challenging to generate the Voronoi diagram for ellipsoids~\cite{Ioannis07}. 

Control barrier functions (CBFs) are another popular method to guarantee the safety of the system by formulating a quadratic programming (QP) problem \cite{Ames2017,Ames2019_CBF_thapp,Egerstedt21}. 
Extensive studies utilize a CBF-based framework to achieve collision avoidance among multi-robot systems. 
For example, the work \cite{Wang2017} achieves a distributed collision-free coordination in multi-robot systems with heterogeneity in the control input range. 
The robots with limited sensing ranges are also considered in \cite{Glotfelter19} with employing the hybrid CBFs.

The work \cite{Singletary2020} provides the comparative analysis between CBFs and APFs, which proves that one can obtain CBFs from a given APF.
The authors also claim that CBFs designed from APFs have additional beneficial properties, such as mitigation of oscillation, 
compared with its counterpart of APFs.
In addition, CBFs have the capacity to embrace different types of safety requirements, including collision avoidance, prevention of battery depletion \cite{Notomista21}, and connectivity maintenance \cite{Ong21}, as one synthesized controller.
Because of these virtues, this paper opts for the CBF-based approach. 
Still, most of the papers mentioned above model the robot as a point or a sphere. 



The work \cite{Do2013} addresses the flocking for ellipsoidal rigid bodies, where the APF-based method is employed for collision avoidance. However, the condition utilized for designing the repulsive potential function takes complicated form and is not straightforwardly extendable to CBFs.
The authors in \cite{Verginis2019} develop a collision avoidance methodology by utilizing the result in the computer graphics fields, which provides the separating conditions between ellipsoids. Still, the metric used in the separating condition does not intuitively align with the distance, and its physical interpretation might be difficult to understand. 
The work \cite{Dhal21} considers planar robots having quadratic surfaces, where a condition calculated from robots' relative velocities, collision cone, is utilized for avoiding a collision. However, the condition requires several procedures to derive and is not easy to employ as CBFs.

\begin{figure}[t!]
    \vspace{-0.25cm}
    \centering
    \subfloat[Initial supporting hyperplane]
    {\makebox[0.47\hsize][c]{\includegraphics[width=0.45\linewidth]{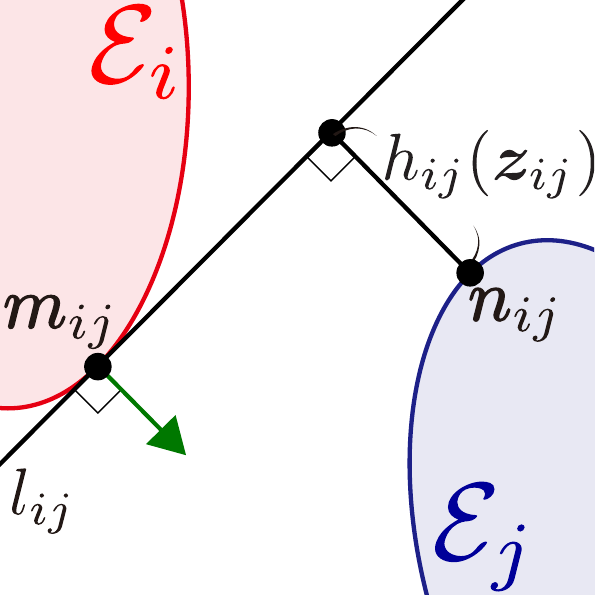}
    \label{subfig:ellipse_rot_1}}} 
    ~
    \subfloat[Updated supporting hyperplane]
    {\makebox[0.47\hsize][c]{\includegraphics[width=0.45\linewidth]{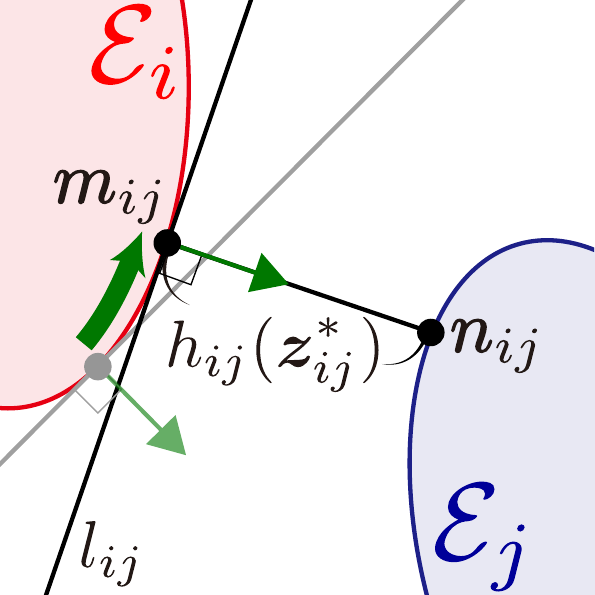}
    \label{subfig:ellipse_rot_2}}}
    \caption{The supporting hyperplane $l_{ij}$ separating two elliptical rigid bodies $\mc E_i$ and $\mc E_j$. 
    Both (a) and (b) show the distance $h_{ij}(\bm{z}_{ij})$ between the ellipse $\mc E_j$ and a supporting hyperplane $l_{ij}$. The tangent point is denoted as $\bm{m}_{ij}$, which is specified by a unit vector $\bm{z}_{ij}$ as detailed in Section~\ref{sec:ellip}.}
    \label{fig:ellipse_rot}
\end{figure}

The work \cite{Srinivasan21} presents the extent-compatible CBF, which can enforce the safety of the robot having volume. 
The proposed method relies on the sum-of-squares (SOS) based optimization method and could be applied to ellipsoidal rigid bodies. Still, the computational burden in the SOS problem might hinder the application to a team composed of many robots. 
The work~\cite{Thirugnanam2022}, which coincidentally was presented around the same time as our preliminary work~\cite{Nishimoto22}, considers collision avoidance for polygonal robots, where a nonsmooth CBF is utilized. 
The authors in~\cite{Thirugnanam2022} developed this approach to handle general convex robots in \cite{Thirugnanam2023}, where the input is calculated in a centralized manner.
%
The paper~\cite{Dai2023safe} also considers collision avoidance for a general convex shape by employing a scaling factor instead of the distance.  
Both \cite{Thirugnanam2023} and \cite{Dai2023safe} require solving an additional optimization problem other than a QP yielding a collision-free input, hence necessitating an extra computational effort.

Shifting the focus to spacecraft navigation fields, several studies consider collision avoidance between a spacecraft, modeled as a point or a circle, and debris represented as circles or ellipses~\cite{Park11,Park2016}. The work~\cite{Park2016} sets a rotating supporting hyperplane 
on the ellipsoidal debris to prevent collisions. Since the supporting hyperplane linearizes the constraints for collision avoidance, this formulation reduces the computation time of a navigation controller. 
Still, typical methodologies for preparing supporting hyperplanes require a user-specified constant angular velocity for rotating hyperplanes or a predefined spacecraft trajectory for setting multiple hyperplanes along a path~\cite{Park2016, Zagaris18,Malyuta2021}. 
In addition, the spacecraft is modeled as a point or sphere; hence this approach cannot be readily extendable for collision avoidance for ellipsoidal rigid bodies.

This paper presents a novel CBF that achieves collision avoidance for ellipsoidal rigid bodies while leveraging the simplicities of the rotating supporting hyperplane method. The update rule for the supporting hyperplane is newly developed by providing theoretical results eliminating conservativeness and guaranteeing safety. 
The proposed method only requires solving a QP to obtain a collision-free input; hence its computational effort is low, as same as traditional CBFs modeling a robot as a circle. 
Furthermore, we formulate the problem in the Special Euclidean Group to represent a collision avoidance law for the rigid bodies' poses in 2D and 3D environments. 
More detailed explanations of the contributions are shown below.




\subsection{Contributions} \label{ssec:contri}
This paper develops our preliminary work~\cite{Nishimoto22}, which only considered collision avoidance for elliptical robots in a 2D environment, where the dynamics of the robots were modeled as a single integrator. In addition, the presented collision avoidance law only allowed centralized computation. The contributions of this paper are as follows.
\begin{itemize}
    \item We formulate the collision avoidance problem in the Special Euclidean Group, where the dynamics of rigid bodies are modeled as rigid body motion (RBM) to design a unified collision avoidance law for the robot's pose (position and attitude) in 2D and 3D environments. With this extension, we newly derive the time derivative of the CBF along with the dynamics represented as RBM, which is presented in Lemma~\ref{lem:dot_h}. 
    \item A novel CBF utilizing a rotating supporting hyperplane is developed, which modifies the one in our conference work~\cite{Nishimoto22} to account for a 3D case. We prove that the maximum of the optimization problem considered in the update rule of the supporting hyperplane is equal to the actual distance between two rigid bodies and ensure the forward invariance of the set representing no collisions.
    \item We show that the QP yielding a collision-free input can be calculated in a distributed manner by assuming the communication between rigid bodies. Furthermore, we prove the validity of the proposed CBF, namely the feasibility of the presented QP in both 2D and 3D scenarios.
    \item More comprehensive simulation studies are presented to demonstrate the effectiveness of the proposed method. 
    \item We extend the proposed method to achieve collision avoidance for nonholonomic systems by employing a vehicle as a case study, where the update rule of the supporting hyperplane is modified.
\end{itemize}





\section{Preliminary: Control Barrier Function} \label{sec:Preliminary}

This section introduces a CBF, which will be utilized to guarantee the collision avoidance of the robots. 
Together with a basic CBF, we explain an approach to guarantee safety described by the high-relative degree constraints, which will be utilized when we extend our method to a vehicle having the nonholonomic dynamics in Section~\ref{sec:vehicle}.
Note that this paper considers rigid bodies in the Special Euclidean Group, whose dynamics are expressed by a different form with \eqref{eq:cont_affine}. Still, the same approach can be utilized by calculating the time derivatives of CBFs along the trajectories of the system described by RBM, as will be shown in Lemma~\ref{lem:dot_h}. 

Let us consider the control affine system
\begin{align} \label{eq:cont_affine}
    \dot{\bm{x}} = f(\bm{x}) + g(\bm{x})\bm{u},
\end{align}
where $f$ and $g$ are locally Lipschitz, $\bm{x} \in \R^n$ and $\bm{u} \in \R^m$. 
We also introduce a set defined as the zero super-level set of a continuously differentiable function $h(\bm{x})$, namely, $\mc S = \{ \bm{x} \in \R^n \mid h(\bm{x}) \geq 0 \}$.
Then, a CBF is defined as follows. 
\begin{definition} \cite[Def. 5]{Ames2017} \label{def:CBF}
The function $h$ is a control barrier function (CBF) defined on a set $\bar{\mathcal{S}}$ with $\mathcal{S} \subseteq \bar{\mathcal{S}} \subset \mathbb{R}^n$, if there exists an extended class $\mathcal{K}$ function $\alpha$, such that for the control system~\eqref{eq:cont_affine}
\begin{align}
    \begin{split}
    &\sup_{\bm{u}}\dot h(\bm{x}, \bm{u})  = \sup_{\bm{u}}\left[L_fh(\bm{x})+L_gh(\bm{x})\bm{u}\right]\geq -\alpha(h(\bm{x})),\\
    &\hspace{6.4cm} \forall \bm{x}\in\bar{\mathcal{S}},  \label{CBFdefine}
    \end{split}
\end{align}
where $L_f h(\bm{x})$ and $L_g h(\bm{x})$ are the Lie derivatives of $h$ along $f(\bm{x})$ and $g(\bm{x})$, respectively.
\end{definition}

The forward invariance of the set $\mc S$, defined just below \eqref{eq:cont_affine}, can be achieved through the following proposition.
\begin{proposition}\cite[Cor. 2]{Ames2017} 
Given a set $\mc S$, if $h$ is a CBF on $\bar{\mc S}$, then any Lipschitz continuous controller $u(\bm{x}): \bar{\mc S} \to \R^m$ such that 
\begin{align} \label{eq:CBF_cond}
\dot h(\bm{x}, \bm{u}) = L_f h(\bm{x}) + L_g h(\bm{x})\bm{u}(\bm{x})  \geq -\alpha( h(\bm{x}) )
\end{align}
will render the set $\mc S$ forward invariant.
\end{proposition}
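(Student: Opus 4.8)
The plan is to invoke the comparison principle for scalar differential inequalities, which is the standard route to forward invariance of a super-level set under a CBF condition. First I would substitute the given Lipschitz continuous controller $\bm{u}(\bm{x})$ into \eqref{eq:cont_affine} to obtain the closed-loop vector field $f(\bm{x}) + g(\bm{x})\bm{u}(\bm{x})$. Since $f$ and $g$ are locally Lipschitz and $\bm{u}$ is Lipschitz continuous, this closed-loop field is locally Lipschitz, so for every initial condition $\bm{x}(0) \in \mc S$ there is a unique solution $\bm{x}(t)$ on a maximal interval of existence. Evaluating \eqref{eq:CBF_cond} along this solution yields the scalar differential inequality $\dot h(\bm{x}(t)) \geq -\alpha(h(\bm{x}(t)))$, which is the object the comparison argument will control.

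Next I would introduce the comparison system $\dot y = -\alpha(y)$ with $y(0) = h(\bm{x}(0)) \geq 0$, so that the comparison lemma gives $h(\bm{x}(t)) \geq y(t)$ on the common interval of existence. It then suffices to show that $\{ y \geq 0 \}$ is forward invariant for this scalar system. Because $\alpha$ is an extended class $\mathcal{K}$ function we have $\alpha(0) = 0$, so $y \equiv 0$ is an equilibrium; a trajectory starting at $y(0) \geq 0$ cannot cross this equilibrium, whence $y(t) \geq 0$ for all $t \geq 0$. Combining the two inequalities gives $h(\bm{x}(t)) \geq 0$, i.e. $\bm{x}(t) \in \mc S$ for as long as the solution exists, and a standard maximal-interval argument then extends this to all $t \geq 0$ since the trajectory remains in $\mc S$.

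The step I expect to be the main obstacle is the rigorous justification of the scalar forward-invariance claim, namely that $y(t) \geq 0$ for the comparison ODE. If $\alpha$ is merely continuous and strictly increasing rather than locally Lipschitz, uniqueness of the scalar solution can fail and the ``cannot cross the equilibrium'' reasoning needs care. The clean fallback is to argue via the Nagumo (sub-tangentiality) condition directly on the boundary $\{ h = 0 \}$, where \eqref{eq:CBF_cond} gives $\dot h \geq -\alpha(0) = 0$, so the closed-loop field is inward-pointing or tangent to $\mc S$. Reconciling the comparison-lemma route with a possibly non-Lipschitz $\alpha$ is the only genuinely delicate point; the existence, uniqueness, and inequality manipulations are otherwise routine.
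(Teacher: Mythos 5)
The paper itself offers no proof of this proposition---it is quoted directly from \cite[Cor.~2]{Ames2017}---so the relevant comparison is with the proof in that cited reference, which is precisely the comparison-lemma argument you give. Your proposal is correct and essentially identical to that standard proof: reduce to the scalar differential inequality $\dot h \ge -\alpha(h)$ along the locally Lipschitz closed-loop flow, compare with $\dot y = -\alpha(y)$, and use that $\{ y \ge 0\}$ is invariant for the scalar system; the one genuine subtlety is exactly the one you flag (regularity of $\alpha$), which the literature resolves by assuming $\alpha$ locally Lipschitz, and your Nagumo fallback is the accepted alternative when that assumption is dropped.
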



The condition \eqref{eq:CBF_cond} can be integrated into the control law to ensure the forward invariance of the set by leveraging Quadratic Programming (QP). Let us denote the nominal input as $\bm{u}_{\rm nom}$, and wish to modify it minimally invasive way so that the condition \eqref{eq:CBF_cond} is satisfied.
This objective can be achieved by employing the input $\bm{u}^*$ obtained from the following QP.
\begin{subequations} \label{eq:QP_default}
\begin{align}
    \bm{u}^* =& \argmin_{\bm{u}}~\|\bm{u}-\bm{u}_{\mathrm{nom}}(\bm{x})\|^2\\
    &~\mbox{s.t.}~L_fh(\bm{x})+L_gh(\bm{x})\bm{u}\geq -\alpha(h(\bm{x})) \label{eq:QP_general_const}
\end{align}
\end{subequations}

The previous discussion describes how CBFs can be utilized to ensure the safety constraints for a control affine system~\eqref{eq:cont_affine}. 
However, so far, we presume that the control input appears in the first derivative of the CBF with respect to time, as in \eqref{eq:CBF_cond}. 
This property can be formally expressed as having a relative degree one, which is defined as follows.
\begin{definition} \cite[Def.~5.2]{Egerstedt21}
The system $\dot{\bm{x}} = f(\bm{x}) + g(\bm{x})\bm{u}$, with output $y=h(\bm{x})$, has relative degree, $r \in {\mathbb Z_+}$, at $\bm{x}_0$, if 
\begin{align}
    L_g L_f^\delta h(\bm{x}) = 0,~\forall \delta \leq r-2,
\end{align}
$\forall \bm{x}$ in a neighborhood of $\bm{x}_0$, and 
\begin{align}
    L_g L_f^{r-1} h(\bm{x}_0) \neq 0.
\end{align}
\end{definition}
The QP shown in \eqref{eq:QP_default} cannot be employed if the safety constraint is not of relative degree one, because the control input does not appear in the first derivative of CBFs~\cite{Nguyen2016,Xiao2022,Notomista21}. 
To overcome this issue, the framework of a basic CBF needs to be augmented.
The rest of this section introduces the approach presented in~\cite{Notomista21} to grant safety for a higher relative degree system, which will be utilized in Section~\ref{sec:vehicle}.


Let us first consider when the relative degree of the system is two. To guarantee the forward invariance of the safe set $\mc S_1 = \{ \bm{x} \in \R^n \mid h_1(\bm{x}) \geq 0 \}$, the first derivative of the CBF $h_1(\bm{x})$ should satisfy the following condition.
\begin{align} \label{eq:CBF_issue_degree2}
    \dot h_1(\bm{x}) + \alpha_1(h_1(\bm{x})) = L_f h_1(\bm{x}) + \alpha_1(h_1(\bm{x})) \geq 0
\end{align}
To ensure the condition \eqref{eq:CBF_issue_degree2}, let us define an additional CBF, where we set $\alpha_1(h_1(\bm{x})) = \gamma_1 h_1(\bm{x})$ for simplicity, as
\begin{subequations}
\begin{align}
    h_2(\bm{x}) &= \dot h_1(\bm{x}) + \gamma_1 h_1(\bm{x}) \\
    &=L_f h_1(\bm{x}) + \gamma_1 h_1(\bm{x}),
\end{align}
\end{subequations}
whose zero superlevel set is $\mc S_2 = \{ \bm{x} \in \R^n \mid h_2(\bm{x}) \geq 0 \}$.
If there exists a positive constant $\gamma_1$ and a locally Lipschitz extended class $\mc K$ function $\alpha_2$ such that
\begin{align} \label{eq:CBF_cond_degree2}
\begin{split} 
    &\sup_{\bm{u}} \left[ \dot h_2(\bm{x}, \bm{u}) + \alpha_2(h_2(\bm{x})) \right] \\
    &=\sup_{\bm{u}} \left[ L_f^2 h_1(\bm{x}) + L_g L_f h_1(\bm{x}) \bm{u} \right.\\
    &\hspace{3cm} \left. + \gamma_1 L_f h_1(\bm{x}) + \alpha_2(h_2(\bm{x})) \right] \geq 0,
\end{split}
\end{align}
then $h_2$ is a valid CBF. 
This implies that by employing 
\begin{align} 
    \dot h_2(\bm{x}, \bm{u}) + \alpha_2(h_2(\bm{x})) \geq 0
\end{align}
as the constraint of the QP, instead of \eqref{eq:QP_general_const}, the resultant input $\bm{u}^*$ renders the set $\mc S_2$ forward invariant and, in turn, the set $\mc S_1$ too. The following proposition generalizes this technique.

\begin{proposition} \cite[Thm.~1]{Notomista21}
 Given a dynamical system \eqref{eq:cont_affine}, a sufficiently smooth CBF $h_1(\bm{x})$ with relative degree $r$ and a CBF $h_r(\bm{x})$ that can be evaluated recursively starting from $h_1(\bm{x})$ using the following equation: 
\begin{align}
     h_{\delta + 1}(\bm{x}) = \dot h_{\delta}(\bm{x}) + \alpha_\delta( h_\delta(\bm{x}) ),~ 1\leq \delta < r
\end{align}
 with $\alpha_\delta$ continuously differentiable extended class $\mc K$ functions, we define the set $K_r(\bm{x})$ as 
 \begin{align}
     \begin{split}
     &K_r (\bm{x}) = \Bigg\{ \bm{u} \in \R^m ~\Bigg|~ L_f^r h_1(\bm{x}) + L_g L_f^{r-1} h_1(\bm{x}) \bm{u} \\
     &+\!\sum_{i=1}^{r-1} 
     \sum_{J \in \bigl(
\begin{smallmatrix}
   r-1 \\
   i
\end{smallmatrix}
\bigl) } \prod_{j \in J} 
     \frac{\partial \alpha_j}{\partial h_j} L_f^{r-i} h_1(\bm{x}) \!+\! \alpha_r( h_r(\bm{x}) ) \!\geq\! 0 \Bigg\}, 
     \end{split}
 \end{align}
where $\bigl(\begin{smallmatrix}
   r-1 \\
   i
\end{smallmatrix}\bigl)$
is the set of $i$ combinations from the set $\{ 1\cdots {r-1} \} \subset \mathbb{N}$ and $\alpha_r$ is a locally Lipschitz extended class $\mc K$ function. Then, any Lipschitz continuous controller $u \in K_r(\bm{x})$ will render the set $\mc S_1 = \{ \bm{x} \in \R^n \mid h_1(\bm{x}) \geq 0 \}$ forward invariant.
\end{proposition}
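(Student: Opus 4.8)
The plan is to exploit the cascade structure that is already built into the recursion $h_{\delta+1} = \dot h_\delta + \alpha_\delta(h_\delta)$. I introduce the chain of superlevel sets $\mc S_\delta = \{ \bm{x} \in \R^n \mid h_\delta(\bm{x}) \geq 0 \}$ for $\delta = 1, \dots, r$. The key observation is that the inequality defining $K_r(\bm{x})$ is, by design, nothing other than the relative-degree-one CBF condition $\dot h_r(\bm{x},\bm{u}) + \alpha_r(h_r(\bm{x})) \geq 0$ applied to the last function $h_r$ of the chain. Once this identification is made, I would argue that any admissible $\bm{u} \in K_r(\bm{x})$ keeps $h_r$ nonnegative along trajectories, and then propagate nonnegativity back down the chain $h_r \Rightarrow h_{r-1} \Rightarrow \cdots \Rightarrow h_1$, thereby rendering $\mc S_1$ forward invariant (for trajectories initialized in $\bigcap_{\delta=1}^{r} \mc S_\delta$, as is standard for high-order CBFs).

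The first and principal task is to verify that the explicit expression in the definition of $K_r(\bm{x})$ really equals $\dot h_r(\bm{x},\bm{u}) + \alpha_r(h_r(\bm{x}))$. I would establish by induction on $\delta$ the closed form
\begin{align*}
    h_{\delta+1}(\bm{x}) = L_f^{\delta} h_1(\bm{x}) + \sum_{i=1}^{\delta-1} \sum_{J} \prod_{j\in J} \frac{\partial \alpha_j}{\partial h_j} L_f^{\delta-i} h_1(\bm{x}) + \alpha_\delta(h_\delta(\bm{x})),
\end{align*}
where the inner sum runs over the $i$-element subsets $J$ of $\{1,\dots,\delta-1\}$. The induction step amounts to applying $\dot{(\cdot)} = L_f(\cdot) + L_g(\cdot)\bm{u}$ to $h_\delta$ and then adding $\alpha_\delta(h_\delta)$: the Lie derivative along $f$ raises each $L_f^{k} h_1$ to $L_f^{k+1} h_1$ and, through the chain rule, produces the additional factors $\partial \alpha_j/\partial h_j$ that are bookkept by the combinatorial sum. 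Here the relative-degree hypothesis is essential: since $L_g L_f^{\delta} h_1 = 0$ for all $\delta \leq r-2$ and hence $L_g h_j = 0$ for $j < r$, the control $\bm{u}$ does not appear in any $\dot h_\delta$ with $\delta < r$, so every intermediate differentiation is taken purely along $f$. Only at the final differentiation does $\bm{u}$ enter, through the single term $L_g L_f^{r-1} h_1(\bm{x})\bm{u}$, reproducing exactly the control-dependent term of $K_r(\bm{x})$. This combinatorial matching is where the bulk of the effort lies and is the step most prone to indexing errors.

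With the identification in hand, the second task is the back-propagation, which I would carry out with the comparison lemma. Fix a closed-loop trajectory $\bm{x}(t)$ generated by a Lipschitz controller with $\bm{u}(\bm{x}(t)) \in K_r(\bm{x}(t))$, so that $\dot h_r \geq -\alpha_r(h_r)$ holds along it. Since $\alpha_r$ is a locally Lipschitz extended class $\mc K$ function, the scalar system $\dot y = -\alpha_r(y)$ has $y \equiv 0$ as an equilibrium and leaves $\{ y \geq 0 \}$ invariant; comparing $h_r(\bm{x}(t))$ against it yields $h_r(\bm{x}(t)) \geq 0$ for all $t$ whenever $h_r(\bm{x}(0)) \geq 0$. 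Now the recursion reads $h_r = \dot h_{r-1} + \alpha_{r-1}(h_{r-1})$, so $h_r \geq 0$ is precisely $\dot h_{r-1} \geq -\alpha_{r-1}(h_{r-1})$; applying the same comparison argument (each $\alpha_\delta$ being continuously differentiable, hence locally Lipschitz) gives $h_{r-1}(\bm{x}(t)) \geq 0$. Iterating this implication down the chain $\delta = r-1, r-2, \dots, 1$ preserves nonnegativity of each $h_\delta$ and, at the last step, delivers $h_1(\bm{x}(t)) \geq 0$ for all $t$, i.e. forward invariance of $\mc S_1$. The only subtlety is that the argument presumes the trajectory starts in $\bigcap_{\delta} \mc S_\delta$, so that every comparison can be seeded with a nonnegative initial value.
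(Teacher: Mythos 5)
This proposition is not proved in the paper at all: it is imported verbatim from the cited reference as a preliminary tool (the paper's ``proof'' is the citation itself), so there is no in-paper argument to compare against. Judged on its own terms, your skeleton is the standard high-order-CBF argument and matches how the cited source reasons: read the $K_r(\bm{x})$ inequality as the relative-degree-one condition $\dot h_r(\bm{x},\bm{u}) + \alpha_r(h_r(\bm{x})) \geq 0$ for the last member of the chain, keep $h_r$ nonnegative by the comparison lemma, and propagate nonnegativity down through $h_r = \dot h_{r-1} + \alpha_{r-1}(h_{r-1})$ to $h_1$. Your caveat that the trajectory must start in $\bigcap_{\delta=1}^{r} \mc S_\delta$, not merely in $\mc S_1$, is also correctly placed.

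The gap sits exactly in the step you flag as ``the bulk of the effort'' and then assert rather than carry out: the claimed closed form for $h_{\delta+1}$ is false for general continuously differentiable extended class $\mc K$ functions once $r \geq 3$. When you differentiate a term containing the factor $\partial \alpha_j / \partial h_j$, the chain rule also differentiates that factor, producing second-derivative terms that the combinatorial sum does not contain. Concretely, for $r=3$ one has $h_3 = L_f^2 h_1 + \frac{\partial \alpha_1}{\partial h_1} L_f h_1 + \alpha_2(h_2)$, and hence
\begin{align*}
\dot h_3 + \alpha_3(h_3) ={}& L_f^3 h_1 + L_g L_f^2 h_1 \bm{u} + \frac{\partial^2 \alpha_1}{\partial h_1^2}\bigl(L_f h_1\bigr)^2 \\
&+ \Bigl(\frac{\partial\alpha_1}{\partial h_1} + \frac{\partial\alpha_2}{\partial h_2}\Bigr) L_f^2 h_1 + \frac{\partial\alpha_1}{\partial h_1}\frac{\partial\alpha_2}{\partial h_2} L_f h_1 + \alpha_3(h_3),
\end{align*}
which exceeds the $K_3$ expression by $\frac{\partial^2 \alpha_1}{\partial h_1^2}(L_f h_1)^2$. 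This surplus term can be negative (take $\alpha_1(s)=\tanh(s)$, which is strictly concave for $s>0$), so $\bm{u} \in K_3(\bm{x})$ does \emph{not} imply $\dot h_3 \geq -\alpha_3(h_3)$, and your comparison/back-propagation chain cannot even be started. The identification you rely on is exact only when every $\alpha_\delta$ with $\delta \leq r-2$ is linear (so its second derivative vanishes), or trivially when $r=2$ --- which happens to be the only instance this paper ever uses (the vehicle extension with $h_{ij,veh} = \dot h_{ij} + h_{ij}$ and linear class $\mc K$ functions). To close the argument you must either restrict the $\alpha_\delta$ accordingly, or carry the extra chain-rule terms explicitly and acknowledge that the combinatorial formula in the statement (an imprecision inherited from the cited theorem) is what needs amending; as written, your induction does not go through.
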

\section{Problem Formulation}

This paper considers a collision avoidance method for rigid bodies, the shape of which can be modeled by an ellipse or an ellipsoid. 
The rigid bodies, labeled through the index set $\mc N = \{1 \cdots n\}$, are distributed in the Euclidean space $\R^d$, as illustrated in Fig.~\ref{fig:scenario}. 
Note that this paper considers the scenario with $d = 2$ or $d = 3$. 
We denote the world coordinate frame as $\Sigma_w$. 
The coordinate frame of rigid body~$i$ is defined as $\Sigma_i$, which is arranged at the center of rigid body~$i$ so that its axes aligned with each axis of an ellipse or an ellipsoid.
The relative pose of $\Sigma_i$ with respect to $\Sigma_w$ is described as $g_i = (\bm{p}_i, R_i) \in SE(d):=\R^d \times SO(d)$ with the position $\bm{p}_i \in \R^d$ and the orientation $R_i\in SO(d):=\{ R\in \R^{d\times d} \mid RR^\top = I_d,~\det{(R)} = 1  \}$. 
Hereafter, we denote a two-dimensional ellipse and three-dimensional ellipsoid as an ellipsoid all together to make notations simpler.

The area rigid body~$i$ occupies is modeled as an ellipsoid described as
\begin{align} \label{eq:set_ellipsoid}
    \mc E_i \!=\! \left\{ \bm{q} \!\in\! \R^d \mid (\bm{q}\!-\!\bm{p}_i )^\top R_i Q_i^{-2} R_i^\top (\bm{q}\!-\!\bm{p}_i ) \!- \!1 \leq 0 \right\},
\end{align}
where $Q_i$ is a diagonal matrix having $q_{im}$ as the $m$-th diagonal element corresponding with the length of the $m$-th axis of the ellipsoid.
As detailed later, we assume that rigid body~$i$ can obtain the pose $(\bm{p}_j, R_j)$ and shape $Q_j$ of other rigid bodies $j\in \mc N\backslash \{i\}$ through sensing or communications.

Let us denote the body velocity of rigid body~$i$ relative to $\Sigma_w$ as $V_{i}^b = [\bm{v}_{i}^\top~\bm{\omega}_{i}^\top]^\top \in \R^{d+\frac{d(d-1)}{2}}$, where $\bm{v}_{i}\in \R^d$ and $\bm{\omega}_{i}\in \R^{\frac{d(d-1)}{2}}$ are the translational and angular body velocity, respectively\footnote{While an angular body velocity is a scalar for $d=2$, making a bold letter unsuitable, we use $\bm{\omega}$ for both $d=2$ and $d=3$ for notational simplicity.}.
We also introduce the operator $\wedge$, which renders a skew-symmetric matrix, i.e., an element of $so(d):= \{ S \in \R^{d\times d} \mid S + S^\top = O_d\}$. 
More specifically, $\wedge$ renders $so(2)$ from $a \in \R$ and $so(3)$ from $\bm{a} = [a_1~a_2~a_3]^\top \in \R^3$ as
\begin{align} \label{eq:wedge}
    \hat a = 
    \begin{bmatrix}
    0 & -a\\
    a & 0
    \end{bmatrix},~
    \hat{\bm{a}} = 
    \begin{bmatrix}
    0 & -a_3 & a_2 \\
    a_3 & 0 & -a_1 \\
    -a_2 & a_1 & 0
    \end{bmatrix}.
\end{align}
We also define the inverse operator of $\wedge$ as $\vee$.
By employing $\wedge$, the pose $g_i$ and the body velocity $V_{i}^b$ can be described by the following homogeneous representation.
\begin{subequations}\label{eq:RBM_homoge}
\begin{align} 
    g_i &= 
    \begin{bmatrix}
    R_i & \bm{p}_i \\
    0 & 1
    \end{bmatrix} \in \R^{(d+1)\times (d+1)} \label{eq:gi_homoge}\\
    \hat V_{i}^b &= 
    \begin{bmatrix}
    \hat{\bm{\omega}}_i & \bm{v}_i \\
    0 & 0
    \end{bmatrix} \in \R^{(d+1)\times (d+1)}
\end{align}
\end{subequations}
Then, the dynamics of rigid body~$i$ can be modeled as the rigid body motion \cite{Hatanaka2015_Springer} 
\begin{align} \label{eq:dyn_RBM}
    \dot g_{i} = g_{i} \hat V_{i}^b.
\end{align}
Except for Section~\ref{sec:vehicle} exemplifying the proposed CBF can be extendable to a nonholonomic system, the dynamics of rigid body~$i$ is governed by \eqref{eq:dyn_RBM}, where the body velocity $V_i^b$ constitutes a control input for rigid body~$i$.


This paper proposes a collision avoidance method for a team of rigid bodies described with \eqref{eq:set_ellipsoid}.
If we can obtain the minimum distance between $\mc E_i$ and $\mc E_j$ as $w_{ij}^*(g_i, g_j)$, the safe set preventing collisions between rigid bodies $i$ and $j$ can be described as
\begin{align}\label{eq:safeset}
    \mathcal{S}_{ij}=\left\{(g_i, g_j) \in SE(d) \times SE(d) \mid w_{ij}^*(g_i,g_j)\geq 0\right\},
\end{align}
where this set has to be rendered forward invariant.

\section{CBFs for Ellipsoidal Rigid Bodies} \label{sec:ellip}

\begin{figure}[t!]
    \vspace{-0.1cm}
    \centering
    \subfloat[$h_{ij}(g_i, g_j, \bm{z}_{ij})<0$]
    {\makebox[0.48\hsize][c]{\includegraphics[width=0.4\linewidth]{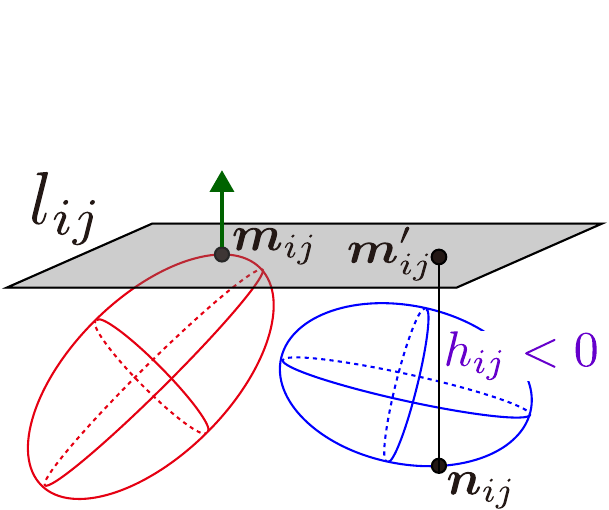}
    \label{subfig:h_exp_same}}} 
    ~
    \subfloat[$h_{ij}(g_i, g_j, \bm{z}_{ij})>0$]
    {\makebox[0.48\hsize][c]{\includegraphics[width=0.4\linewidth]{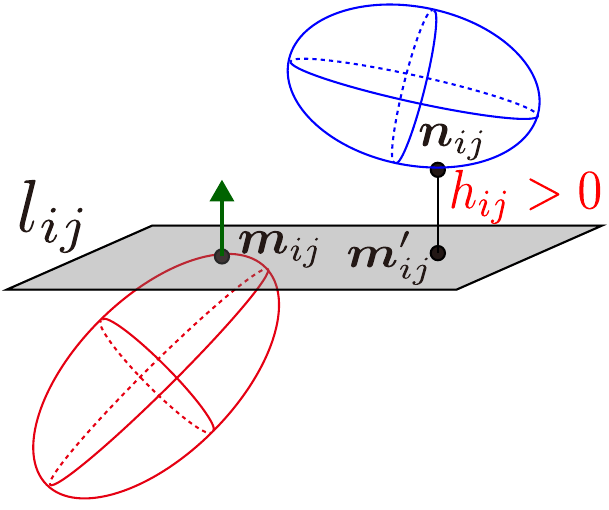}
    \label{subfig:h_exp_diff}}}
    \caption{The separation condition evaluated by $h_{ij}$, the minimum signed distance from the supporting hyperplane $l_{ij}$. 
    $h_{ij}$ returns zero on $l_{ij}$ while it takes the larger value as a point to be evaluated moves to the upper direction specified by the green normal vector at a point $\bm{m}_{ij}$.
    Note that if the supporting hyperplane $l_{ij}$ is not adequately prepared, $h_{ij}$ could become negative even when two rigid bodies are separated, as shown in (a).}
    \label{fig:h_exp}
\end{figure}

This section presents a novel CBF, which enforces the forward invariance of the set $\mc S_{ij}$, namely preventing collisions between ellipsoidal rigid bodies $i$ and $j$. 
If $w_{ij}^*$ in \eqref{eq:safeset} can be derived in analytical form, we can employ it as a CBF. 
However, as mentioned in Section~\ref{sec:intro} and \cite{Choi2020, GIRAULT22,Choi2005}, it is difficult to obtain the analytical solution of the distance between two ellipsoids (for both $d=2$ and $d=3$). 
Moreover, the numerical solutions of $w_{ij}^*$ cannot be utilized as a CBF either since the CBF-constraint in \eqref{eq:QP_default} requires to calculate the derivative of the CBF. 
To circumvent the difficulties, we formulate a novel CBF that employs a surrogate distance that can be derived in analytical form.
More specifically, we utilize the signed distance from a supporting hyperplane of an ellipsoid to the other one, depicted as $h_{ij}$ in Fig.~\ref{fig:h_exp}.
Because $h_{ij}$ could take a shorter length than $w_{ij}^*$ with a naively prepared supporting hyperplane, we propose the framework that drives $h_{ij}$ to $w_{ij}^*$ based on the gradient of a signed distance, as shown in Fig.~\ref{fig:ellipse_rot}.

\subsection{Supporting Hyperplanes Separating Two Ellipsoids}

We first define a supporting hyperplane $l_{ij}$ of an ellipsoid, which touches $\mc E_i$ at the point $\bm{m}_{ij}$, as shown in Fig.~\ref{fig:v_m_trans}.
Let us define the point $\bm{m}_{ij}$ as
\begin{align}
    \bm{m}_{ij}(g_i,\bm{z}_{ij})&=\bar{Q}_i(R_i) \bm{z}_{ij}+\bm{p}_i,  \label{eq:m_ij}\\
    \| \bm{z}_{ij}\| &= 1, \label{eq:v_ij}
\end{align}
where $\bar{Q}_i(R_i) = R_i Q_i R_i^\top$ is a positive definite matrix. 
Hereafter, for notational simplicity, we will denote $\bar{Q}_i(R_i)$ as $\bar{Q}_i$.
Note that $\bar{Q}_i^\top \bar{Q}_i = \bar{Q}_i^2 = R_i Q_i^2 R_i^\top$ holds since $Q_i$ is a diagonal matrix.
The unit vector $\bm{z}_{ij} \in \mathbb S^{(d-1)}$ specifies a point on the boundary of the ellipsoid $\mc E_i$ as shown in Fig.~\ref{fig:v_m_trans}.
Then, the supporting hyperplane $l_{ij}$ is expressed as 
\begin{align}
    l_{ij} \!=\! \left\{\bm{q}\in\mathbb{R}^d~|~\bm{z}_{ij}^\top \bar{Q}_i^{-1} \bm{q}-\left(1+\bm{z}_{ij}^\top \bar{Q}_i^{-1} \bm{p}_i\right) \!=\! 0\right\}, \label{hyperplane}
\end{align}
which is determined by $g_i$ and $\bm{z}_{ij}$.

\begin{figure} 
    \centering
    \includegraphics[width=52mm]{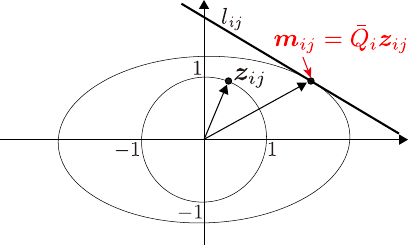}
    \caption{Relationship between a unit vector $\bm{z}_{ij}$ and a point $\bm{m}_{ij}$ on an ellipse characterized by $\bar{Q}_i$. 
    $\bm{z}_{ij}$ in \eqref{eq:v_ij} specifies a point on the unit circle. Then, $\bm{z}_{ij}$ is transformed with the positive definite matrix $\bar{Q}_i$ to designate the point on the ellipse, which is adopted as a tangent point of a supporting hyperplane $l_{ij}$. Here, we omit $\bm{p}_i$ by setting the center of the ellipse $\bm{p}_i=[0~0]^\top$.}
    \label{fig:v_m_trans}
\end{figure}

Let us utilize the supporting hyperplane $l_{ij}$ to derive a collision-free condition, in which two ellipsoids are separated by the supporting hyperplane. 
For this goal, we calculate the signed distance from the supporting hyperplane $l_{ij}$, which yields a positive value to a point in a different half-space with $\mc{E}_i$, and a negative value otherwise, as in Fig.~\ref{fig:h_exp}.
This signed distance from $l_{ij}$ to $\mc{E}_j$ renders the minimum value when the distance is evaluated with the point $\bm{n}_{ij} \in \mc E_j$ as 
\begin{align}
    \bm{n}_{ij}(g_i,g_j,\bm{z}_{ij}) = -\frac{1}{\left\|\bar{Q}_j \bar{Q}_i^{-1}\bm{z}_{ij}\right\|}\bar{Q}_j^2\bar{Q}_i^{-1}\bm{z}_{ij} + \bm{p}_j.
\end{align}
The minimum signed distance from $l_{ij}$ is then described as
\begin{align}
    h_{ij}(g_i, g_j, \bm{z}_{ij})\!=\!\frac{ -\left\|\bar{Q}_j \bar{Q}_i^{-1} \bm{z}_{ij}\right\| \!+\! (\bm{p}_j-\bm{p}_i)^\top\bar{Q}_i^{-1} \bm{z}_{ij} \!-\! 1}{\left\|\bar{Q}_i^{-1}\bm{z}_{ij}\right\|}. \label{eq:CBF_cand}
\end{align}
As shown in Fig.~\ref{fig:h_exp}, $h_{ij}(g_i, g_j, \bm{z}_{ij})$ yields a positive value if and only if $\mc{E}_i$ and $\bm{n}_{ij}$ exist in the different half-space divided by $l_{ij}$.
Note that the point $\bm{n}_{ij} \in \mc E_j$ is not the nearest point from the supporting hyperplane $l_{ij}$ (Fig.~\ref{fig:h_exp}\subref{subfig:h_exp_same}) because this signed distance increases along with the direction of the normal vector of $l_{ij}$, depicted as the green arrow in Fig.~\ref{fig:h_exp}.

Since $h_{ij}(g_i,g_j, \bm{z}_{ij}) > 0$ signifies $\mc{E}_i$ and $\mc{E}_j$ are separated by the supporting hyperplane defined with $\bm{z}_{ij}$, the function $h_{ij}(g_i,g_j, \bm{z}_{ij})$ could be employed as a CBF to achieve collision avoidance.
However, this condition could be a conservative if $l_{ij}$ is prepared naively as shown in Fig.~\ref{fig:h_exp}\subref{subfig:h_exp_same}, where $h_{ij}(g_i,g_j, \bm{z}_{ij}) < 0$ holds even if two ellipsoids are collision-free. 
In other words, a naive choice of $\bm{z}_{ij}$ makes $h_{ij}(g_i,g_j, \bm{z}_{ij})$ smaller than the actual distance $w_{ij}^*$, resulting in too conservative evasive motion. 
To alleviate this gap, let us develop the following optimization problem that intends to maximize $h_{ij}(g_i,g_j, \bm{z}_{ij})$ by rotating the supporting hyperplane on the boundary of $\mc E_i$.
\begin{subequations}\label{eq:dual}
\begin{align}
    \max_{\bm{z}_{ij}}~& h_{ij}(g_i,g_j, \bm{z}_{ij})\\
    \mbox{s.t.}~& \| \bm{z}_{ij} \| = 1
\end{align}
\end{subequations}
Notice that the optimization problem \eqref{eq:dual} maximizes $h_{ij}(g_i,g_j, \bm{z}_{ij})$ over the unit vector $\bm{z}_{ij}$ while fixing two ellipsoids. 
In the rest of this subsection, let us denote $h_{ij}(g_i,g_j, \bm{z}_{ij})$ as $h_{ij}(\bm{z}_{ij})$ for notational simplicity since we discuss the property of the optimization problem \eqref{eq:dual} that presumes fixed ellipsoids.
Then, in the next subsection, we discuss how to update $\bm{z}_{ij}$ with moving ellipsoidal rigid bodies to establish a real-time collision avoidance methodology.

To elucidate the meaning of the optimization problem \eqref{eq:dual}, let us introduce the optimization problem
\begin{subequations}\label{eq:primal}
\begin{align}
    \min_{\bm{x},\bm{y},\bm{w}}~&\|\bm{w}\|\\
    \mbox{s.t.}~&f_i(\bm{x})\leq 0,~f_j(\bm{y})\leq 0,\\
    &\bm{y}-\bm{x}=\bm{w},
\end{align}
\end{subequations}
where $f_i(\bm{x}):=(\bm{x}-\bm{p}_i)^\top\bar{Q}_i^{-2}(\bm{x}-\bm{p}_i)-1\leq 0$ signifies the condition $\bm{x} \in \mc E_i$.
Therefore, the problem \eqref{eq:primal} returns the shortest distance between two ellipsoids, namely $w_{ij}^*$, as its optimal solution $\|\bm{w}^*\|$.
Then, the following theorem formalizes the relationship between $w_{ij}^*$ and $h_{ij}(\bm{z}_{ij})$.
\begin{theorem} \label{th:Prim_dual}
Suppose that two ellipsoids $\mc E_i$ and $\mc E_j$ have no overlap, namely $\mc E_i \cap \mc E_j = \emptyset$ holds.
Then, the optimization problem \eqref{eq:dual} is the dual of the problem \eqref{eq:primal}.
Furthermore, the strong duality holds between the optimization problems \eqref{eq:primal} and \eqref{eq:dual}, namely the following condition holds.
\begin{align} \label{eq:no_gap}
    w_{ij}^* = h_{ij}(\bm{z}_{ij}^*) \geq h_{ij}(\bm{z}_{ij})
\end{align}
\end{theorem}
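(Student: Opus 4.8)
The plan is to recognize the primal problem \eqref{eq:primal} as a convex program, compute its Lagrangian dual in closed form, and show that the resulting dual coincides with \eqref{eq:dual} after a change of variables; strong duality then delivers \eqref{eq:no_gap} directly. So the backbone is standard convex duality, with the real content lying in the algebraic reduction of the dual function to the expression \eqref{eq:CBF_cand} for $h_{ij}$.

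First I would establish convexity together with a constraint qualification. The objective $\|\bm{w}\|$ is convex; the constraints $f_i(\bm{x})\le 0$ and $f_j(\bm{y})\le 0$ are convex quadratics because $\bar{Q}_i^{-2}$ and $\bar{Q}_j^{-2}$ are positive definite; and $\bm{y}-\bm{x}=\bm{w}$ is affine. Since $\mc E_i$ and $\mc E_j$ are full-dimensional, there exist points with $f_i(\bm{x})<0$ and $f_j(\bm{y})<0$, so Slater's condition holds and strong duality is guaranteed.

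Next I would form the Lagrangian $L = \|\bm{w}\| + \lambda_i f_i(\bm{x}) + \lambda_j f_j(\bm{y}) + \bm{\mu}^\top(\bm{y}-\bm{x}-\bm{w})$ with $\lambda_i,\lambda_j\ge 0$, and minimize over the primal variables. The minimization over $\bm{w}$ is the conjugate of the Euclidean norm: it gives value $0$ when $\|\bm{\mu}\|\le 1$ and $-\infty$ otherwise, which is precisely what produces the unit-ball constraint on the multiplier $\bm{\mu}$. The minimizations over $\bm{x}$ and $\bm{y}$ are unconstrained strictly convex quadratics (forcing $\lambda_i,\lambda_j>0$ to avoid $-\infty$); the stationarity conditions $\bm{x}-\bm{p}_i=\tfrac{1}{2\lambda_i}\bar{Q}_i^2\bm{\mu}$ and $\bm{y}-\bm{p}_j=-\tfrac{1}{2\lambda_j}\bar{Q}_j^2\bm{\mu}$, substituted back, give a dual function in $\lambda_i,\lambda_j,\bm{\mu}$. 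Maximizing over each $\lambda$ (a one-dimensional problem of the form $-c/(4\lambda)-\lambda$, optimized at $\lambda_i^\star=\tfrac12\|\bar{Q}_i\bm{\mu}\|$) leaves the reduced dual
\[
g(\bm{\mu}) = -\|\bar{Q}_i\bm{\mu}\| - \|\bar{Q}_j\bm{\mu}\| + \bm{\mu}^\top(\bm{p}_j-\bm{p}_i),\qquad \|\bm{\mu}\|\le 1.
\]

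The crux is identifying $g$ with $h_{ij}$ and reconciling the constraint $\|\bm{\mu}\|\le 1$ with $\|\bm{z}_{ij}\|=1$. Under the substitution $\bm{\mu}=\bar{Q}_i^{-1}\bm{z}_{ij}/\|\bar{Q}_i^{-1}\bm{z}_{ij}\|$ — a bijection of the unit sphere onto itself with inverse $\bm{z}_{ij}=\bar{Q}_i\bm{\mu}/\|\bar{Q}_i\bm{\mu}\|$ — a direct computation using $\|\bar{Q}_i\bm{\mu}\|=1/\|\bar{Q}_i^{-1}\bm{z}_{ij}\|$ reduces $g(\bm{\mu})$ exactly to $h_{ij}(\bm{z}_{ij})$ in \eqref{eq:CBF_cand}. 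To close the ball-versus-sphere gap, I would note that $g$ is positively homogeneous of degree one, $g(t\bm{\mu})=t\,g(\bm{\mu})$, so its maximum over the unit ball is attained on the boundary whenever that maximum is positive; and it is positive because $\mc E_i\cap\mc E_j=\emptyset$ forces $w_{ij}^*>0$. Hence the dual optimum equals $\max_{\|\bm{z}_{ij}\|=1}h_{ij}(\bm{z}_{ij})$, and strong duality yields $w_{ij}^*=h_{ij}(\bm{z}_{ij}^*)$, the inequality in \eqref{eq:no_gap} being immediate from optimality of $\bm{z}_{ij}^*$. I expect the main obstacle to be the bookkeeping in this dual derivation: extracting $\|\bm{\mu}\|\le 1$ from the nonsmooth term $\|\bm{w}\|$, ruling out the degenerate multipliers $\lambda_i=0$ or $\lambda_j=0$, and carrying the two nested inner maximizations through to the closed form; the short homogeneity argument that bridges the ball and sphere constraints is easy to overlook but essential.
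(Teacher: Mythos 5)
Your proposal is correct and follows essentially the same route as the paper's proof: form the Lagrangian dual of \eqref{eq:primal}, obtain the unit-ball constraint from the conjugate of $\|\bm{w}\|$, eliminate $\lambda_i,\lambda_j$ by the one-dimensional maximization (the paper's $M(a,\tau)$ analysis, with the same optimizer $\lambda_i^*=\tfrac12\|\bar{Q}_i\bm{\mu}\|$), change variables between $\bm{\mu}$ and $\bm{z}_{ij}$ to recover $h_{ij}$, and invoke Slater's condition for strong duality. The only cosmetic difference is how the ball constraint is reduced to the sphere: the paper writes $\bm{s}=\varrho\,\bar{Q}_i^{-1}\bm{z}_{ij}/\|\bar{Q}_i^{-1}\bm{z}_{ij}\|$ and argues $\varrho=1$ because a separating hyperplane with $h_{ij}(\bm{z}_{ij})>0$ exists, whereas you use degree-one homogeneity of the reduced dual together with $w_{ij}^*>0$ --- the same underlying observation.
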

\begin{proof}
See Appendix~\ref{Ap:Prim_dual}. 
Note that our preliminary work \cite{Nishimoto22} proves the case of $d=2$. The proof in Appendix~\ref{Ap:Prim_dual} modifies the one in \cite{Nishimoto22} to suit the formulation of this paper.
\end{proof}

Theorem~\ref{th:Prim_dual} signifies that the proposed update law, maximizing $h_{ij}(\bm{z}_{ij})$ by rotating the supporting hyperplane $l_{ij}$ as in Fig.~\ref{fig:ellipse_rot}, renders $h_{ij}(\bm{z}_{ij})$ the actual distance $w_{ij}^*$ between two ellipsoids.
In addition, the equation \eqref{eq:no_gap} implies $h_{ij}(\bm{z}_{ij})>0$ serves as a sufficient condition for avoiding collisions, even if $\bm{z}_{ij}$ does not converge to the maximizer of \eqref{eq:dual}, $\bm{z}_{ij}^*$.

\subsection{CBFs Integrating Rotating Supporting Hyperplanes} \label{ssec:Propose_method}

This subsection first presents the collision avoidance method for two ellipsoidal rigid bodies. 
We then extend the result to allow distributed implementation for $n$ rigid bodies, assuming each rigid body can communicate its state. 
Note that we again regard $h_{ij}$ as a function of $g_i$, $g_j$, and $\bm{z}_{ij}$, although we have omitted the dependency of $h_{ij}$ for notational simplicity.


As presented in the optimization problem \eqref{eq:dual} and Fig.~\ref{fig:ellipse_rot}, the evaluation of $h_{ij}$ requires a supporting hyperplane $l_{ij}$ between rigid bodies~$i$ and $j$. 
Without loss of generality, let us suppose that a rigid body with a lower ID equips a supporting hyperplane. 
Since the supporting hyperplane should be rotated to minimize the gap between $h_{ij}$ and $w_{ij}^*$, we regard $\bm{z}_{ij}$ as an additional state variable of rigid body $i~(i < j)$. 
In other words, to avoid collisions between rigid bodies $i$ and $j$, we need to control $(g_i, g_j, \bm{z}_{ij})$.

In our proposed method, $\bm{z}_{ij}$ is updated based on the following dynamics
\begin{align} \label{eq:dyn_v} %
    \dot{\bm{z}}_{ij} = \left(I_d -\bm{z}_{ij} \bm{z}_{ij}^\top\right) \bm{u}_{\bm{z}_{ij}} 
\end{align}
with the input $\bm{u}_{\bm{z}_{ij}}\in \R^d$ rendered by a QP for collision avoidance, as will be presented in \eqref{eq:QP_two}.
Note that the update rule \eqref{eq:dyn_v} restricts $\bm{z}_{ij}$ on the unit ball.
Since the pose of rigid bodies $i$ and $j$ follow the dynamics shown in \eqref{eq:dyn_RBM}, the combined control input to be considered for two rigid bodies $i$ and $j$ can be denoted as $\bm{u}_{ij} = [ \bm{v}_i^\top~\bm{\omega}_i^\top~\bm{u}_{\bm{z}_{ij}}^\top~\bm{v}_j^\top~\bm{\omega}_j^\top]^\top$.

Let us denote the nominal input of $\bm{z}_{ij}$ as $\bm{u}_{\mathrm{nom},\bm{z}_{ij}}$ that will be integrated into the CBF-based framework, as discussed in Section~\ref{sec:Preliminary}. 
Then, $\bm{u}_{\mathrm{nom},\bm{z}_{ij}}$ that intends to maximize $h_{ij}$ can be derived from the gradient ascent law as
\begin{align} \label{eq:grad_ascent}
    \bm{u}_{\mathrm{nom},\bm{z}_{ij}}=k_{\bm z} \frac{\partial h_{ij}}{\partial \bm{z}_{ij}},~~~k_{\bm z} >0,
\end{align}
which drives $h_{ij}$ to the local maximum. 
Note that the maximum point of $h_{ij}$ changes as two rigid bodies move around during the operation. 
Hence, the update rule should make $h_{ij}$ converge fast enough to keep up with their motion. 
This requirement is easily satisfied by setting a large enough $k_{\bm z}$ since $\bm{z}_{ij}$ is a virtual variable that does not depend on the physical dynamics of the rigid bodies.
The calculated result of $\partial h_{ij} / \partial \bm{z}_{ij}$ will be shown in \eqref{eq:diff_results_z}, where we derive $\dot h_{ij}$.


Having defined the state to be considered, we are now ready to present a novel collision avoidance strategy for ellipsoidal rigid bodies. 
Let us introduce a new safe set $\hat{\mc S}_{ij}$ that integrates an angle of the supporting hyperplane, namely $\bm{z}_{ij}$, as follows.
\begin{align}
    \hat{\mc S}_{ij} \!=\! \left\{(g_i, g_j, \bm{z}_{ij}) \!\in\! SE(d) \!\times\! SE(d) \!\times\! \mathbb{S}^{(d-1)}  \,\middle|\, h_{ij}\!\geq\! 0\right\}\!
\end{align}
Since $w_{ij}^*>h_{ij}$ holds from Theorem~\ref{th:Prim_dual}, the original safe set $\mc S_{ij}$ can be rendered forward invariant if the proposed CBF guarantees the forward invariance of $\hat{\mc S}_{ij}$. 
To achieve this goal, we propose the following QP, which synthesizes the nominal input and the proposed CBF $h_{ij}$ as 
\begin{subequations} \label{eq:QP_two}
\begin{align}
    &\bm{u}_{ij}^* = \argmin_{\bm{u}_{ij}}~\left\|\bm{u}_{ij}-\bm{u}_{\mathrm{nom},ij}\right\|^2\\
    &\mbox{s.t.}~\dot{h}_{ij} \geq -\alpha(h_{ij}). \label{eq:trueCBFcond}
\end{align}
\end{subequations}
The following lemma presents $\dot{h}_{ij}$ in both ${d=3}$ and $d=2$.
\begin{lemma} \label{lem:dot_h}
The time-derivative of the CBF~\eqref{eq:CBF_cand} along with the dynamics represented as RBM~\eqref{eq:dyn_RBM} in $d=3$ is derived as
\begin{align} \label{eq:dot_h}
\begin{split}
    \dot{h}_{ij} &= \bm{\zeta}_{ij} R_i \bm{\omega}_i + \bm{\eta}_{ij} R_i \bm{v}_i 
    \!+\! \bm{\mu}_{ij} \left(I_d-\bm{z}_{ij} \bm{z}_{ij}^\top \right) \bm{u}_{z_{ij}} \\
    &+ \bm{\nu}_{ij} R_j \bm{\omega}_j + \bm{\xi}_{ij} R_j \bm{v}_j,
\end{split}
\end{align}
where each coefficient can be derived as
\begin{subequations} \label{eq:diff_results}
\begin{align}
\begin{split}
&\bm{\zeta}_{ij} = \frac{\rho}{\left\|\bar{Q}_i^{-1}\bm{z}_{ij}\right\|^3} \bm{z}_{ij}^\top \bar{Q}_i^{-2} \hat{\bm{z}}_{ij} \\
&+\! \frac{1}{\sigma}\left( \bm{z}_{ij}^\top \bar{Q}_i^{-1} \bar{Q}_j^{2} \right) \left( \left(\bar{Q}_i^{-1} \bm{z}_{ij} \right)^\wedge - \bar{Q}_i^{-1} \hat{\bm{z}}_{ij}  \right) \\
&+\! \frac{1}{\left\|\bar{Q}_i^{-1} \! \bm{z}_{ij}\right\|} \! \left(\! \left( \bm{p}_j \!-\! \bm{p}_i \right)^\top \! \bar{Q}_i^{-1} \! \hat{\bm{z}}_{ij} \!-\! \bm{z}_{ij}^\top \bar{Q}_i^{-1} \! \left( \bm{p}_j \!-\! \bm{p}_i \right)^\wedge \! \right)\!,
\end{split} \label{eq:diff_results_omegai} \\
&\bm{\eta}_{ij} = -\frac{1}{\left\|\bar{Q}_i^{-1}\bm{z}_{ij}\right\|} \bm{z}_{ij}^\top \bar{Q}_i^{-1}, \label{eq:diff_results_vi} \\
\begin{split} \label{eq:diff_results_z}
&\bm{\mu}_{ij} \!=\! \frac{\rho}{\left\|\bar{Q}_i^{-1}\bm{z}_{ij}\right\|^3} \bm{z}_{ij}^\top \bar{Q}_i^{-2} 
\!+\! \frac{1}{\left\|\bar{Q}_i^{-1}\bm{z}_{ij}\right\|} \left( \bm{p}_j \!-\! \bm{p}_i \right)^\top \!\! \bar{Q}_i^{-1} \\
&- \frac{1}{\sigma} \bm{z}_{ij}^\top \bar{Q}_i^{-1} \bar{Q}_j^{2}  \bar{Q}_i^{-1}, 
\end{split} \\
&\bm{\nu}_{ij}  = -\frac{1}{\sigma} \bm{z}_{ij}^\top \bar{Q}_i^{-1} \bar{Q}_j^{2} \left( \bar{Q}_i^{-1} \bm{z}_{ij} \right)^\wedge, \label{eq:diff_results_omegaj} \\
&\bm{\xi}_{ij} = \frac{1}{\left\|\bar{Q}_i^{-1}\bm{z}_{ij}\right\|} \bm{z}_{ij}^\top \bar{Q}_i^{-1}, \label{eq:diff_results_vj}
\end{align}
\end{subequations}
with
\begin{subequations} \label{eq:diff_results_const}
\begin{align} 
    \rho &= \left( 1- \left( \bm{p}_j-\bm{p}_i \right)^\top \bar{Q}_{i}^{-1}\bm{z}_{ij} + \left\|\bar{Q}_j \bar{Q}_i^{-1} \bm{z}_{ij}\right\| \right), \\
    \sigma &= \left\|\bar{Q}_j \bar{Q}_i^{-1} \bm{z}_{ij}\right\| \left\|\bar{Q}_i^{-1}\bm{z}_{ij}\right\|.
\end{align}
\end{subequations}
Furthermore, $\dot h_{ij}$ in $d=2$ is calculated as 
\begin{align} \label{eq:dot_h_2D}
\begin{split}
    \dot{h}_{ij} &= \tilde{\bm{\zeta}}_{ij} \bm{\omega}_i + \bm{\eta}_{ij} R_i \bm{v}_i 
    \!+\! \bm{\mu}_{ij} \left(I_d-\bm{z}_{ij} \bm{z}_{ij}^\top\right) \bm{u}_{z_{ij}} \\
    &+ \tilde{\bm{\nu}}_{ij} \bm{\omega}_j + \bm{\xi}_{ij} R_j \bm{v}_j,
\end{split}
\end{align}
with
\begin{subequations} \label{eq:diff_results_2D}
\begin{align}
    \begin{split} 
    &\tilde{\bm{\zeta}}_{ij} = -\frac{\rho}{\left\|\bar{Q}_i^{-1}\bm{z}_{ij}\right\|^3} \bm{z}_{ij}^\top \bar{Q}_i^{-2} \hat{1}\bm{z}_{ij} \\
    &-\! \frac{1}{\sigma}\left( \bm{z}_{ij}^\top \bar{Q}_i^{-1} \bar{Q}_j^{2} \right) \left( \hat{1}\bar{Q}_i^{-1} \bm{z}_{ij}  - \bar{Q}_i^{-1} \hat{1}\bm{z}_{ij}  \right) \\
    &-\! \frac{1}{\left\|\bar{Q}_i^{-1}  \bm{z}_{ij}\right\|} \! \left(\! \left( \bm{p}_j \!-\! \bm{p}_i \right)^\top \! \bar{Q}_i^{-1} \hat{1}\bm{z}_{ij} + \bm{z}_{ij}^\top \bar{Q}_i^{-1} \hat{1}\left( \bm{p}_j \!-\! \bm{p}_i \right) \! \right),
    \end{split} \label{eq:diff_results_omegai_2D} \\
    &\tilde{\bm{\nu}}_{ij}  = \frac{1}{\sigma} \bm{z}_{ij}^\top \bar{Q}_i^{-1} \bar{Q}_j^{2} \hat{1} \left( \bar{Q}_i^{-1} \bm{z}_{ij} \right) \label{eq:diff_results_omegaj_2D}.
\end{align}
\end{subequations}
\end{lemma}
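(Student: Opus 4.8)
The plan is to obtain $\dot h_{ij}$ by a direct chain-rule computation, treating $h_{ij}$ in \eqref{eq:CBF_cand} as a composition of the time-varying quantities $\bar Q_i,\bar Q_j,\bm p_i,\bm p_j$ and $\bm z_{ij}$, and then substituting the time derivatives dictated by the dynamics. Writing the numerator as $-A+B-1$ with $A=\|\bar Q_j\bar Q_i^{-1}\bm z_{ij}\|$ and $B=(\bm p_j-\bm p_i)^\top\bar Q_i^{-1}\bm z_{ij}$, and the denominator as $C=\|\bar Q_i^{-1}\bm z_{ij}\|$, the quotient rule gives $\dot h_{ij}=\big((-\dot A+\dot B)C-(-A+B-1)\dot C\big)/C^2$, so it suffices to compute $\dot A,\dot B,\dot C$. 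Expressed componentwise, the dynamics \eqref{eq:dyn_RBM} read $\dot R_i=R_i\hat{\bm\omega}_i$, $\dot{\bm p}_i=R_i\bm v_i$ (and likewise for $j$), together with $\dot{\bm z}_{ij}=(I_d-\bm z_{ij}\bm z_{ij}^\top)\bm u_{\bm z_{ij}}$; these are the only five velocity quantities that may appear, and the claim is precisely that they enter linearly with the stated coefficients.

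For the key steps, I first differentiate $\bar Q_i=R_iQ_iR_i^\top$, obtaining $\dot{\bar Q}_i=R_i\hat{\bm\omega}_iQ_iR_i^\top-R_iQ_i\hat{\bm\omega}_iR_i^\top$ (using $\hat{\bm\omega}_i^\top=-\hat{\bm\omega}_i$), and then $\tfrac{d}{dt}\bar Q_i^{-1}=-\bar Q_i^{-1}\dot{\bar Q}_i\bar Q_i^{-1}$; the matrix $\bar Q_j$ is handled identically. The norm derivatives follow from $\tfrac{d}{dt}\|\bm w\|=\bm w^\top\dot{\bm w}/\|\bm w\|$ applied to $\bm w=\bar Q_j\bar Q_i^{-1}\bm z_{ij}$ and $\bm w=\bar Q_i^{-1}\bm z_{ij}$, which also explains the factors $\sigma=AC$ and the $\|\bar Q_i^{-1}\bm z_{ij}\|^{3}$ denominators in \eqref{eq:diff_results}. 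The translational inputs enter only through $\dot{\bm p}_i=R_i\bm v_i$ and $\dot{\bm p}_j=R_j\bm v_j$ inside $\dot B$, producing immediately $\bm\eta_{ij}$ and $\bm\xi_{ij}$ with opposite signs, while the explicit $\bm z_{ij}$-dependence, differentiated through $\dot{\bm z}_{ij}$, collects the contributions of $A,B,C$ into $\bm\mu_{ij}$ acting through the projector $(I_d-\bm z_{ij}\bm z_{ij}^\top)$.

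The main obstacle is the rotational part. After substituting $\dot{\bar Q}_i$ and $\tfrac{d}{dt}\bar Q_i^{-1}$, the dependence on $\bm\omega_i$ sits inside $\hat{\bm\omega}_i$ sandwiched between matrices, and every scalar contribution must be rewritten in the form (row vector)$\,R_i\bm\omega_i$ so that the five inputs decouple cleanly. I would do this with the spatial-velocity identity $R_i\hat{\bm\omega}_iR_i^\top=\widehat{R_i\bm\omega_i}$ and the scalar triple-product identities $\bm a^\top\hat{\bm b}\,\bm c=\bm c^\top\hat{\bm a}\,\bm b=-\bm b^\top\hat{\bm a}\,\bm c$, which convert the sandwiched $\hat{\bm\omega}_i$ into the wedge matrices $\hat{\bm z}_{ij}$, $(\bar Q_i^{-1}\bm z_{ij})^\wedge$, and $(\bm p_j-\bm p_i)^\wedge$ appearing in $\bm\zeta_{ij}$ (and, from the $\bar Q_j$ terms, in $\bm\nu_{ij}$). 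Careful bookkeeping of the three numerator/denominator pieces, each of which feeds $\bm\zeta_{ij}$, is the tedious heart of the argument; everything else is routine collection of terms.

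Finally, the $d=2$ result follows by specializing the same computation. Here $\bm\omega_i$ is a scalar and the wedge operator is the fixed generator, $\hat{\bm\omega}_i=\omega_i\hat 1$ with $\hat 1=\big[\begin{smallmatrix}0&-1\\1&0\end{smallmatrix}\big]$, so $\omega_i$ factors out directly without an adjoint step; moreover $R_i\hat 1R_i^\top=\hat 1$ because planar rotations commute, which is exactly why the $2$D terms read $\tilde{\bm\zeta}_{ij}\bm\omega_i$ and $\tilde{\bm\nu}_{ij}\bm\omega_j$ with no leading $R_i$ or $R_j$. Replacing the $3$D cross-product identities by their planar counterpart (left multiplication by $\hat 1$ being a $90^\circ$ rotation) and recollecting gives $\tilde{\bm\zeta}_{ij}$ and $\tilde{\bm\nu}_{ij}$ in \eqref{eq:diff_results_2D}, while $\bm\eta_{ij},\bm\mu_{ij},\bm\xi_{ij}$ carry over unchanged.
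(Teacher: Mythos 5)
Your proposal is correct and takes essentially the same approach as the paper's proof: direct term-by-term differentiation of $h_{ij}$ under the componentwise dynamics $\dot R_i = R_i\hat{\bm{\omega}}_i$, $\dot{\bm{p}}_i = R_i\bm{v}_i$, $\dot{\bm{z}}_{ij} = (I_d-\bm{z}_{ij}\bm{z}_{ij}^\top)\bm{u}_{\bm{z}_{ij}}$, with the spatial-velocity identity $R\hat{\bm{\omega}}R^\top = (R\bm{\omega})^\wedge$ and skew-symmetry/triple-product manipulations used to isolate the coefficients of $R_i\bm{\omega}_i$ and $R_j\bm{\omega}_j$, and with the 2D case obtained exactly as in the paper from $\hat{a}\bm{b}=a\hat{1}\bm{b}$ and $R\hat{1}R^\top=\hat{1}$. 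The only cosmetic differences are bookkeeping: you apply the quotient rule to $(-A+B-1)/C$ and differentiate $\bar{Q}_i^{-1}$ via $\tfrac{d}{dt}\bar{Q}_i^{-1}=-\bar{Q}_i^{-1}\dot{\bar{Q}}_i\bar{Q}_i^{-1}$, whereas the paper splits $h_{ij}$ into its three summands and differentiates $1/\|\bar{Q}_i^{-1}\bm{z}_{ij}\|$ and $\bar{Q}_i^{-1}\bm{z}_{ij}=R_iQ_i^{-1}R_i^\top\bm{z}_{ij}$ directly by the product rule, which is an equivalent computation.
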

\begin{proof}
See Appendix~\ref{Ap:dot_h_calc}.
\end{proof}
Let us emphasize that, as stated in the contributions in Section~\ref{ssec:contri}, this paper newly derives $\dot h_{ij}$ along with the dynamics expressed as RBM~\eqref{eq:dyn_RBM} to develop a collision avoidance controller for both 2D and 3D environments.
Note that the difference between \eqref{eq:dot_h} and \eqref{eq:dot_h_2D} mainly stems from the operator $\wedge$ that returns a slightly different result in $d=3$ and $d=2$, as in \eqref{eq:wedge}.
Appendix~\ref{Ap:dot_h_calc} also provides more detailed explanations of this difference.
The discussion in the rest of this section proceeds with the QP \eqref{eq:QP_two} with $\dot h_{ij}$ in \eqref{eq:dot_h} because the results are consistent for both $d=2$ and $d=3$.


Although the QP \eqref{eq:QP_two} rectifies the nominal input to avoid collisions, \eqref{eq:trueCBFcond} cannot be evaluated in a distributed fashion because $\dot h_{ij}$ in \eqref{eq:dot_h} requires the information of both rigid bodies' control inputs and a supporting hyperplane. Nevertheless, the proposed method can be distributed by assuming the communication between rigid bodies as follows.
\begin{assumption} \label{ass:commu}
    Rigid body~$i\in \mc N$ can acquire the pose and the shape information of other rigid bodies, namely $(\bm{p}_j, R_j)$ and $Q_j,~\forall j\in \mc N\backslash \{i\}$. In addition, rigid body~$j$ can receive $\bm{z}_{ij}$ from rigid bodies $i,~\forall i\in \{1 \cdots j-1\}$. 
\end{assumption}
Assumption~\ref{ass:commu} implies each rigid body can acquire other rigid bodies' poses and shapes. 
As presented in \cite{Nicholson2019,Ok2019,Srinivasan20}, it is realizable by equipping some sensors on rigid bodies, such as a visual sensor or LiDAR. 
In contrast, $\bm{z}_{ij}$ with $i<j$ needs to be communicated because $\bm{z}_{ij}$ is a virtual variable of rigid body~$i$ and cannot be estimated by rigid body~$j$ from its sensor data. 
Nevertheless, the dimension of $\bm{z}_{ij}$ is only $d$ and easy to transfer.
Note that although we assume a complete graph in Assumption~\ref{ass:commu} to make a discussion simpler, the proposed method can be extendable to a distance-based graph, such as the $\Delta$-disk proximity graph \cite{mesbahi2010graph}, as conducted in \cite{Ibuki22,Ibuki20}.



With Assumption~\ref{ass:commu}, let us consider evaluating the QP \eqref{eq:QP_two} in a distributed manner, where rigid bodies $i$ and $j$ with $i<j$ are responsible for determining $\bm{u}_i = [\bm{v}_{i}^\top~\bm{\omega}_i^\top~\bm{u}_{\bm{z}_{ij}}^\top ]^\top \in \R^{2d+\frac{d(d-1)}{2}}$ and $\bm{u}_j = [\bm{v}_{j}^\top~\bm{\omega}_j^\top ]^\top \in \R^{d+\frac{d(d-1)}{2}}$, respectively. 
Then, the CBF condition \eqref{eq:trueCBFcond} and \eqref{eq:dot_h} can be decomposed into the following two conditions.
\begin{subequations}
\begin{align}
    &\bm{\zeta}_{ij} R_i \bm{\omega}_i + \bm{\eta}_{ij} R_i \bm{v}_i + \bm{\mu}_{ij} \left(I_d-\bm{z}_{ij} \bm{z}_{ij}^\top \right) \bm{u}_{z_{ij}}
    \geq -\frac{1}{2}\alpha(h_{ij}) \label{eq:CBF_cond_i} \\
    &\bm{\nu}_{ij} R_j \bm{\omega}_j + \bm{\xi}_{ij} R_j \bm{v}_j \geq -\frac{1}{2}\alpha(h_{ij}) \label{eq:CBF_cond_j}
\end{align}
\end{subequations}
Note that both rigid bodies $i$ and $j$ can calculate $h_{ij}$ and coefficients as they have all the information needed to evaluate \eqref{eq:CBF_cand} and \eqref{eq:diff_results} under Assumption~\ref{ass:commu}.
Therefore, the conditions \eqref{eq:CBF_cond_i} and \eqref{eq:CBF_cond_j} can be evaluated by rigid bodies~$i$ and $j$, respectively.
In summary, the QP \eqref{eq:QP_two} is distributed with the following two QPs. 
\begin{subequations} \label{eq:QP_dist_i}
\begin{align}
    &\bm{u}_{i}^* = \argmin_{\bm{u}_{i}}~\left\|\bm{u}_{i}-\bm{u}_{\mathrm{nom},i}\right\|^2\\
    &\mbox{s.t.}~\bm{\zeta}_{ij} R_i \bm{\omega}_i \!+\! \bm{\eta}_{ij} R_i \bm{v}_i \!+\! \bm{\mu}_{ij} \left(I_d-\bm{z}_{ij} \bm{z}_{ij}^\top \right) \bm{u}_{z_{ij}}
    \!\geq\! -\frac{1}{2}\alpha(h_{ij}) \label{eq:QP_cond_dist_i}
\end{align}
\end{subequations}
for rigid body $i$ and 
\begin{subequations} \label{eq:QP_dist_j}
\begin{align}
    &\bm{u}_{j}^* = \argmin_{\bm{u}_{j}}~\left\|\bm{u}_{j}-\bm{u}_{\mathrm{nom},j}\right\|^2\\
    &\mbox{s.t.}~\bm{\nu}_{ij} R_j \bm{\omega}_j + \bm{\xi}_{ij} R_j \bm{v}_j \geq -\frac{1}{2}\alpha(h_{ij}) \label{eq:QP_cond_dist_j}
\end{align}
\end{subequations}
for rigid body $j$.
Note that each QP is only responsible for calculating the control input of one rigid body. This is in contrast to the QP \eqref{eq:QP_two}, which calculates both $\bm{u}_{i}$ and $\bm{u}_{j}$ in a centralized manner.

The existence of the control input satisfying the constraint in \eqref{eq:QP_cond_dist_i} and \eqref{eq:QP_cond_dist_j}, also called as the validity of the CBF \cite{Ames2019_CBF_thapp,Notomista21}, is guaranteed by the following theorem.
\begin{theorem} \label{thm:CBF_valid}
The function $h_{ij}$ in \eqref{eq:CBF_cand} is a valid CBF with the distributed QPs in \eqref{eq:QP_dist_i} and \eqref{eq:QP_dist_j} for both $d=2$ and $d=3$ when $\bm{u}_i \in \R^{2d+\frac{d(d-1)}{2}},~\bm{u}_j\in \R^{d+\frac{d(d-1)}{2}}$. 
\end{theorem}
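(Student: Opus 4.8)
The plan is to invoke the standard characterization of CBF validity: $h_{ij}$ is a valid CBF precisely when, at every admissible state, the set of control inputs satisfying the CBF inequality is nonempty. In the distributed setting this amounts to showing that each of the two QPs, \eqref{eq:QP_dist_i} and \eqref{eq:QP_dist_j}, is feasible, and since each QP carries a single scalar affine inequality in an \emph{unconstrained} input, feasibility reduces to a one-line observation. A constraint of the form $\bm{a}\,\bm{u} \geq b$, with $\bm{a}$ a row vector and $\bm{u}$ free, admits a solution if and only if $\bm{a}\neq 0$: when $\bm{a}\neq 0$, taking $\bm{u}=t\,\bm{a}^\top$ with $t$ large enough drives the left-hand side above any prescribed $b=-\tfrac{1}{2}\alpha(h_{ij})$. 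Thus the entire proof reduces to exhibiting, in each distributed constraint, one control channel whose coefficient never vanishes.

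First I would examine \eqref{eq:QP_cond_dist_j} for rigid body~$j$, whose input is $\bm{u}_j=[\bm{v}_j^\top~\bm{\omega}_j^\top]^\top$. The coefficient of the translational velocity $\bm{v}_j$ is $\bm{\xi}_{ij}R_j$, with $\bm{\xi}_{ij}=\frac{1}{\|\bar{Q}_i^{-1}\bm{z}_{ij}\|}\bm{z}_{ij}^\top\bar{Q}_i^{-1}$ from \eqref{eq:diff_results_vj}. Because $\bm{z}_{ij}$ is a unit vector (hence nonzero) and $\bar{Q}_i^{-1}$ is positive definite (hence invertible), the row vector $\bm{z}_{ij}^\top\bar{Q}_i^{-1}$ is nonzero, and the scalar denominator $\|\bar{Q}_i^{-1}\bm{z}_{ij}\|$ is strictly positive; consequently $\bm{\xi}_{ij}\neq 0$. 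Right-multiplication by the invertible rotation $R_j\in SO(d)$ preserves this, yielding $\bm{\xi}_{ij}R_j\neq 0$, so \eqref{eq:QP_cond_dist_j} is always feasible.

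Next I would treat \eqref{eq:QP_cond_dist_i} for rigid body~$i$, whose input $\bm{u}_i$ contains the channel $\bm{v}_i$ with coefficient $\bm{\eta}_{ij}R_i$, where $\bm{\eta}_{ij}=-\frac{1}{\|\bar{Q}_i^{-1}\bm{z}_{ij}\|}\bm{z}_{ij}^\top\bar{Q}_i^{-1}=-\bm{\xi}_{ij}$ from \eqref{eq:diff_results_vi}. By the identical argument $\bm{\eta}_{ij}R_i\neq 0$, so \eqref{eq:QP_cond_dist_i} is feasible irrespective of the values taken by the remaining coefficients $\bm{\zeta}_{ij}$ and $\bm{\mu}_{ij}$; in particular, the projection $I_d-\bm{z}_{ij}\bm{z}_{ij}^\top$ acting on $\bm{u}_{z_{ij}}$ plays no role. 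A single unified argument handles $d=2$ and $d=3$ at once, since $\bm{\eta}_{ij}$ and $\bm{\xi}_{ij}$ carry no dependence on the operator $\wedge$ and therefore coincide in both dimensions; the dimensional discrepancies distinguishing \eqref{eq:dot_h} from \eqref{eq:dot_h_2D} live only in $\bm{\zeta}_{ij}$ and $\bm{\nu}_{ij}$, which are immaterial here.

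Since each distributed QP possesses a feasible input at every admissible state, $h_{ij}$ is a valid CBF, which establishes the claim. The argument is essentially immediate once the translational-velocity coefficients are identified as nonvanishing, so I anticipate no genuine obstacle; the only point deserving care is confirming that the shared denominator $\|\bar{Q}_i^{-1}\bm{z}_{ij}\|$ never degenerates—guaranteed by $\|\bm{z}_{ij}\|=1$ together with the invertibility of $\bar{Q}_i$—so that the coefficients are well-defined, and hence strictly nonzero, throughout the entire domain rather than merely generically.
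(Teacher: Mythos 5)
Your proposal is correct and follows essentially the same route as the paper's own proof: both reduce validity to feasibility of each scalar affine constraint in an unconstrained input, and both establish feasibility by showing that the translational-velocity coefficients $\bm{\eta}_{ij}$ and $\bm{\xi}_{ij}$ never vanish (the paper notes $\|\bm{\eta}_{ij}\|=\|\bm{\xi}_{ij}\|=1$, you argue nonvanishing via positive definiteness of $\bar{Q}_i^{-1}$ and $\|\bm{z}_{ij}\|=1$, which is the same fact), with the same observation that these coefficients are dimension-independent so the argument covers $d=2$ and $d=3$ at once. Your explicit remark that right-multiplication by the rotations $R_i,R_j$ preserves nonvanishing is a small point the paper leaves implicit, but it does not change the approach.
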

\begin{proof}
Let us first discuss the case of $d=3$. 
To guarantee both QPs \eqref{eq:QP_dist_i} and \eqref{eq:QP_dist_j} have a solution for any $(\bm{p}_i, R_i), (\bm{p}_j, R_j) \in SE(3)$ and $\bm{z}_{ij} \in {\mathbb S}^{2}$, we need to prove the coefficients of the constraints in \eqref{eq:QP_cond_dist_i} and \eqref{eq:QP_cond_dist_j} never become zero vecotors, namely $[\bm{\zeta}_{ij}^\top~\bm{\eta}_{ij}^\top~\bm{\mu}_{ij}^\top]^\top \neq \bm{0}$ and $[\bm{\nu}_{ij}^\top~\bm{\xi}_{ij}^\top]^\top \neq \bm{0}$. 
These two conditions are always satisfied because $\| \bm{\eta}_{ij} \| = \| \bm{\xi}_{ij} \| =1$ holds from \eqref{eq:diff_results_vi} and \eqref{eq:diff_results_vj}.
Therefore, each QP can find a solution satisfying the constraints by appropriately selected $\bm{v}_i$ and $\bm{v}_j$. 
The above discussion also applies with $d=2$ because $\dot{h}_{ij}$ in $d=2$ also has $\bm{\eta}_{ij}$ and $\bm{\xi}_{ij}$ as the coefficients of $\bm{v}_i$ and $\bm{v}_j$, as shown in \eqref{eq:dot_h_2D}. This completes the proof.
\end{proof}
Theorem~\ref{thm:CBF_valid} means that, for any $(\bm{p}_i, R_i), (\bm{p}_j, R_j) \in SE(d)$ and $\bm{z}_{ij} \in {\mathbb S}^{(d-1)}$, both QPs \eqref{eq:QP_dist_i} and \eqref{eq:QP_dist_j} can find the control input that renders the set $\hat{\mc S}_{ij}$ forward invariant.

The above discussion can be easily extended to the case of $n$ rigid bodies. Because the rigid body with a smaller ID among any two rigid bodies has a supporting hyperplane, rigid body $i$ owns $(n-i)$ supporting hyperplanes. Let us introduce the vector combining states of all supporting hyperplanes rigid body~$i$ has as $\bm{z}_{i}=[\bm{z}_{i\hspace{0.4mm}i+1}^\top~\bm{z}_{i\hspace{0.4mm}i+2}^\top \cdots \bm{z}_{i\hspace{0.4mm}n}^\top]^\top$ with $\bm{z}_n=\emptyset$, which has the following dynamics as \eqref{eq:dyn_v} 
\begin{align} \label{eq:dyn_zi}
    \dot{\bm{z}}_i = \left( I_{(n-i)} \otimes (I_d -\bm{z}_{ij} \bm{z}_{ij}^\top) \right) \bm{u}_{\bm{z}_i},
\end{align}
where $\otimes$ represents the Kronecker product and $\bm{u}_{\bm{z}_i} = [\bm{u}_{\bm{z}_{i\hspace{0.4mm}i+1}}^\top \cdots \bm{u}_{\bm{z}_{i\hspace{0.4mm}n}}^\top]^\top$.
Then, the ensemble state of rigid body~$i$ becomes $X_i = (\bm{p}_i, R_i, \bm{z}_{i})$.
Rigid body~$i$ calculates the control input $\bm{u}_i = [\bm{v}_{i}^\top~\bm{\omega}_i^\top~\bm{u}_{\bm{z}_{i}}^\top ]^\top$ for this ensemble state $X_i$ by solving the QP
\begin{subequations} \label{eq:QP_N}
\begin{align}
    &\bm{u}_{i}^* = \argmin_{\bm{u}_{i}}~\left\|\bm{u}_{i}-\bm{u}_{\mathrm{nom},i}\right\|_W^2\\
\begin{split}
    &\mbox{s.t.}~\bm{\zeta}_{ij} R_i \bm{\omega}_i + \bm{\eta}_{ij} R_i \bm{v}_i + \bm{\mu}_{ij} (I_d-\bm{z}_{ij} \bm{z}_{ij}^\top) \bm{u}_{\bm{z}_{ij}} \\
    &\hspace{3.4cm}\geq -\frac{1}{2}\alpha(h_{ij}),~ \forall j > i,
\end{split} \\
    &\hspace{0.4cm}~\bm{\nu}_{ij} R_j \bm{\omega}_j + \bm{\xi}_{ij} R_j \bm{v}_j \geq -\frac{1}{2}\alpha(h_{ij}),~ \forall j < i,
\end{align}
\end{subequations}
where $\|\bm{a}\|_W = \bm{a}^\top W \bm{a}$ with a diagonal matrix $W = \diag(\beta_{\bm{v}}I_d, \beta_{\bm{\omega}}I_{d(d-1)/2}, I_{d(n-i)})$ that adjusts the unit difference and can also reflect a user preference on which translational and angular velocity is modified more by $\beta_{\bm{v}}, \beta_{\bm{\omega}}>0$.

\section{Simulation Results with Rigid Body Motion} \label{sec:examples}

This section presents simulation studies, where their movies, including the one in Section~\ref{sec:vehicle}, can be found in the supplementary material.
The proposed method only requires solving \eqref{eq:QP_N} to generate the collision-free input for each rigid body. The computational time of \eqref{eq:QP_N} is short, within 1-2\,ms in most cases for all simulation studies, where we utilize \textit{quadprog} in MATLAB and a PC equipped with Intel i7-10700 CPU.


\subsection{Two-Dimensional Case}

This subsection presents simulation results with two elliptical rigid bodies in a 2D environment, which demonstrates the proposed CBF and the input for the supporting hyperplane successfully prevent collisions between two elliptical rigid bodies. 
The length of the major and the minor axis of two rigid bodies are randomly selected from the range from $1.0$\,m to $2.0$\,m and from $0.4$\,m to $0.8$\,m, respectively.
The initial positions of the two rigid bodies are $\bm{p}_1=[-3~-3]^\top$ and $\bm{p}_2=[2~0]^\top$, as shown in Fig.~\ref{fig:sim_2Dsimp_snap}\subref{subfig:sim_2Dsimp_00s}. 
Figure~\ref{fig:sim_2Dsimp_snap}\subref{subfig:sim_2Dsimp_00s} also illustrates the supporting hyperplane incorporated in the proposed CBF as a black line together with the shortest line connecting the supporting hyperplane and the other rigid body, namely its length corresponding with $h_{12}$. 
We employ $\alpha(h) = 10h$ as an extended class $\mc K$ function.
We set $\beta_{\bm v} = 1$ and $\beta_{\bm{\omega}} = 0.5$.
The nominal input to $\bm{z}_{12}$ is set to $\bm{u}_{{\rm nom}, \bm{z}_{12}} = 20 (\partial h_{12} / \partial \bm{z}_{12})$.
Note that the initial value of $\bm{z}_{12}$ for the supporting hyperplane is set randomly within a range that separates two rigid bodies.

The snapshots of the simulation are shown in Fig.~\ref{fig:sim_2Dsimp_snap}, where a supporting hyperplane rotates on rigid body~1, adapting to the motion of the rigid bodies. 
As a result, the length of the green line, $h_{12}$, takes almost the same value as the actual distance between rigid bodies, $w_{12}^*$. Figure~\ref{fig:sim_2D_simp_CBF} depicts the evolution of $h_{12}$, shown as a blue line, together with the actual distance $w_{12}^*$ calculated from the optimization problem \eqref{eq:primal}, as a red dashed line. 
Note that $w_{12}^*$ is calculated just for comparison, and the proposed method does not necessitate solving \eqref{eq:primal}.
By comparing $h_{12}$ and $w_{12}^*$ in Fig.~\ref{fig:sim_2D_simp_CBF}, $h_{12}$ follows $w_{12}^*$ fast enough to eliminate the conservativeness, namely the error between $h_{12}$ and $w_{12}^*$. In addition, because $h_{12}$ is always larger than $0$, collision avoidance is successfully achieved.

\begin{figure}[t!]
    \vspace{-0.2cm}
    \centering
    \subfloat[$t=0$\,s]
    {\makebox[0.48\hsize][c]{\includegraphics[trim = 0.3cm 0.6cm 0.3cm 1.5cm, clip=true, width=0.48\linewidth]{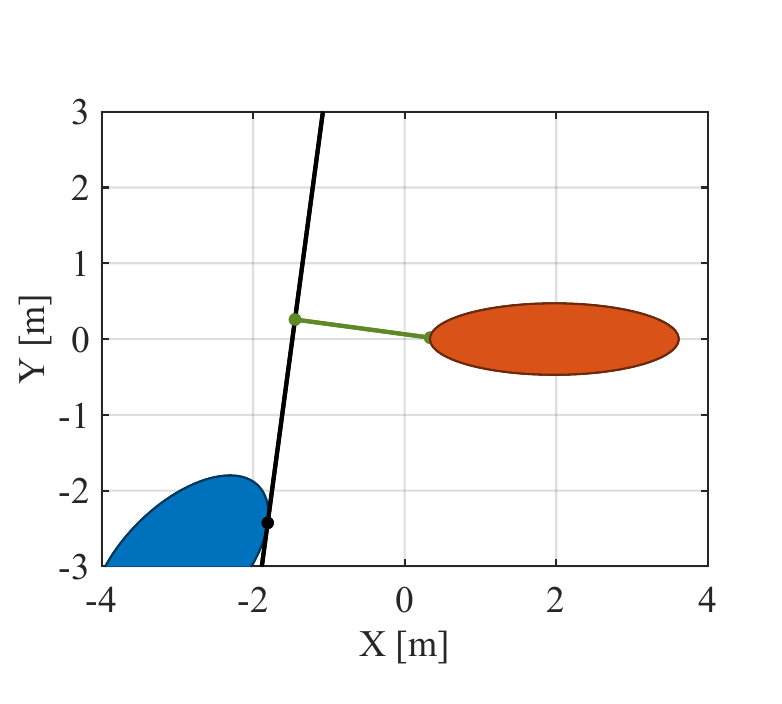}
    \label{subfig:sim_2Dsimp_00s}}} 
    \subfloat[$t=0.8$\,s]
    {\makebox[0.48\hsize][c]{\includegraphics[trim = 0.3cm 0.6cm 0.3cm 1.5cm, clip=true, width=0.48\linewidth]{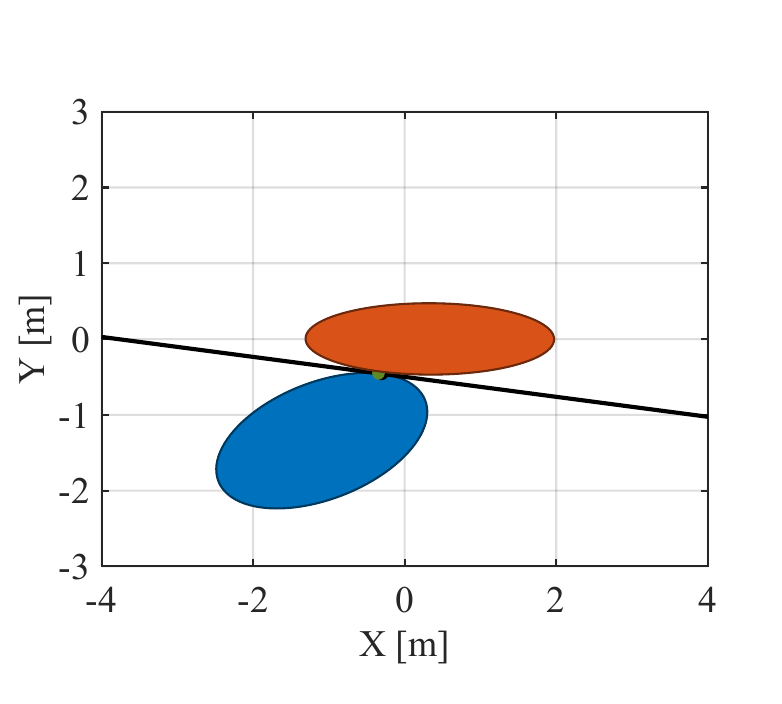}
    \label{subfig:sim_2Dsimp_08s}}}\\
    \subfloat[$t=1.8$\,s]
    {\makebox[0.48\hsize][c]{\includegraphics[trim = 0.3cm 0.6cm 0.3cm 1.5cm, clip=true, width=0.48\linewidth]{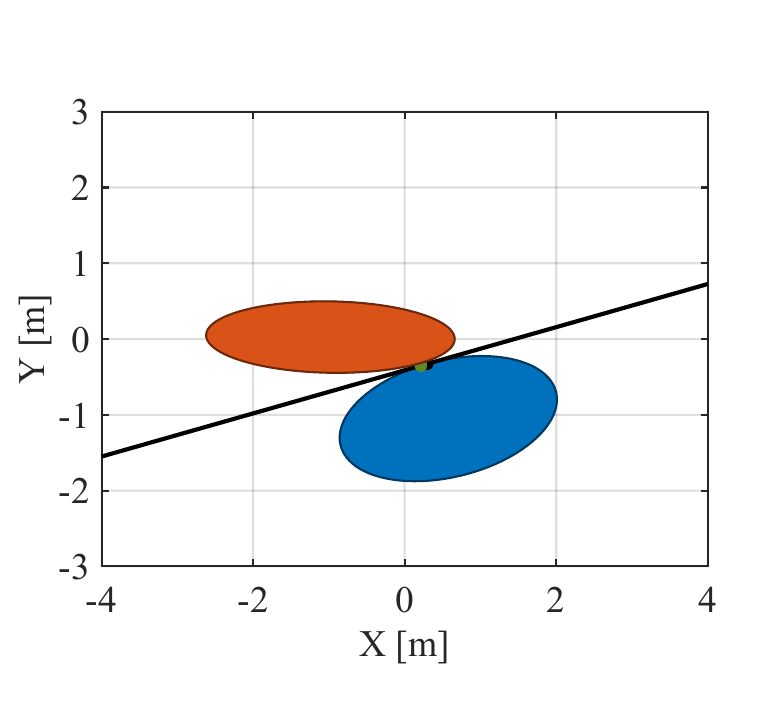}
    \label{subfig:sim_2Dsimp_18s}}} 
    \subfloat[$t=3.5$\,s]
    {\makebox[0.48\hsize][c]{\includegraphics[trim = 0.3cm 0.6cm 0.3cm 1.5cm, clip=true, width=0.48\linewidth]{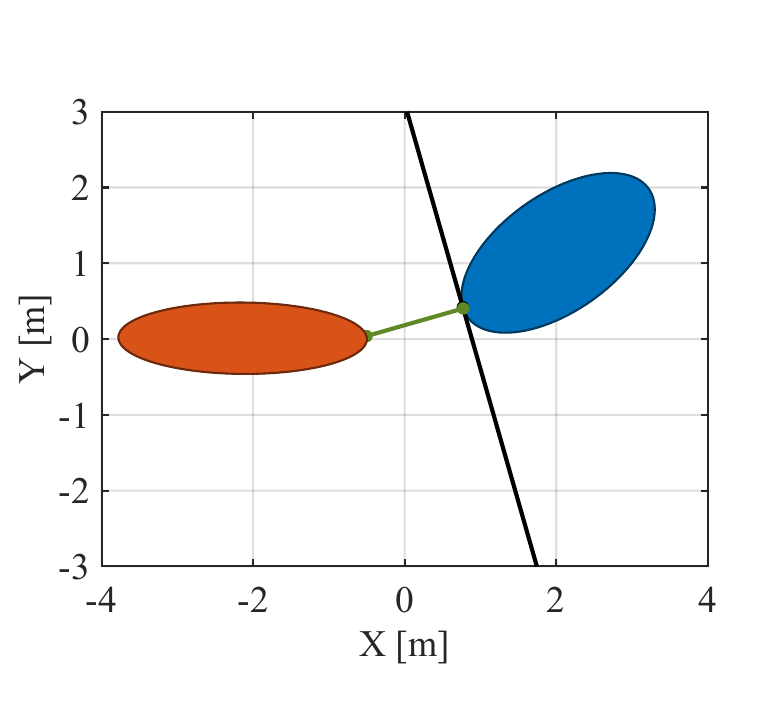}
    \label{subfig:sim_2D_35s}}} 
    \caption{Snapshots of the simulation with the proposed collision avoidance method, where rigid body~1 depicted in blue has a supporting hyperplane. The supporting hyperplane is rotated to maximize the signed distance $h_{12}$, which corresponds with the length of the green line.}
    \label{fig:sim_2Dsimp_snap}
\end{figure}

\begin{figure}
    \centering
    \includegraphics[width=50mm]{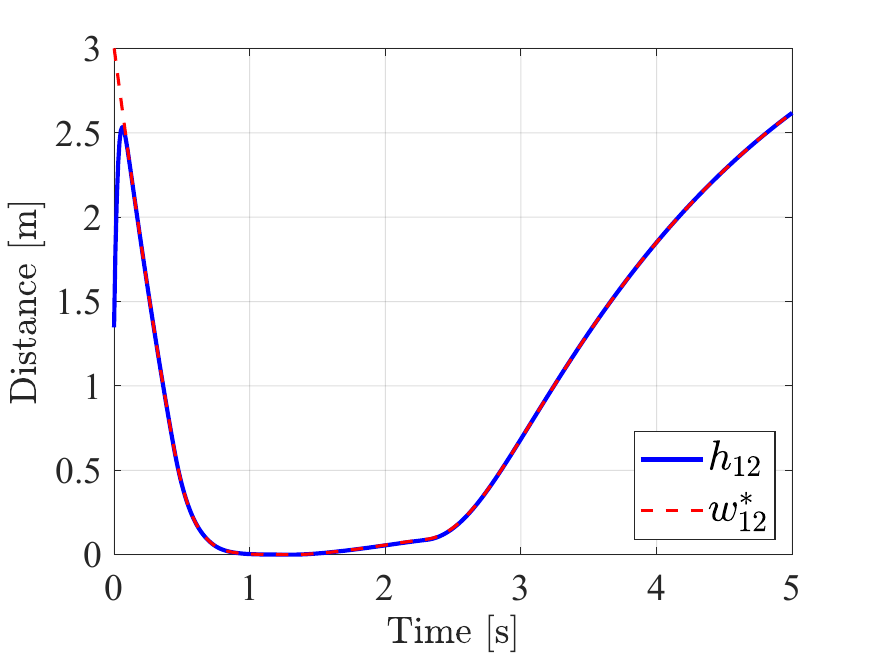}
    \caption{Evolution of the CBF $h_{12}$ and the actual minimum distance between two elliptical rigid bodies, $w_{12}^*$. The CBF $h_{12}$ always takes a smaller value than that of $w_{12}^*$ while keeping a close value with it.}
    \label{fig:sim_2D_simp_CBF}
\end{figure}

\subsection{Comparative Simulation in Two-Dimension}
This subsection provides the comparative study between our proposed CBF and the one employing the discriminant of a cubic characteristic equation \cite{Ibuki22}, which showcases the proposed method exhibits more favorable avoidance trajectories.
To make it easier to understand the difference, this subsection considers a scenario, where an elliptical robot avoids a fixed elliptical obstacle.
If we index the elliptical robot and the obstacle as rigid bodies $1$ and $2$, the QP to be solved by rigid body~1, namely, the robot, is written as 
\begin{subequations} \label{eq:QP_two_obstacle}
\begin{align}
    &\bm{u}_{1}^* = \argmin_{\bm{u}_{1}}~\left\|\bm{u}_{1}-\bm{u}_{\mathrm{nom},1}\right\|_W^2\\
    &\mbox{s.t.}~\dot{h}_{12} \geq -\alpha(h_{12}), 
\end{align}
\end{subequations}
with
\begin{align} \label{eq:dot_h_obstacle}
\begin{split}
    \dot{h}_{12} &= \tilde{\bm{\zeta}}_{12} \bm{\omega}_1 + \bm{\eta}_{12} R_1 \bm{v}_1 
    \!+\! \bm{\mu}_{12} \left(I_d-\bm{z}_{12} \bm{z}_{12}^\top\right) \bm{u}_{\bm{z}_{12}}. \\
\end{split}
\end{align}
Note that since there are no control inputs for the obstacle, the terms having $\bm{\omega}_2$ and $\bm{v}_2$ in \eqref{eq:dot_h_2D} are removed in \eqref{eq:dot_h_obstacle}.

Let us introduce the CBF presented in \cite{Ibuki22}, with which we will compare the performance of our proposed CBF.
The CBF in \cite{Ibuki22} is based on the following cubic characteristic equation 
\begin{align}
J(x) &= \det\left( x T_i(g_i) -T_j(g_j) \right), \label{eq:ellip_chara}\\
T_i(g_i) &= \left(g_i^{-1}\right)^\top 
\begin{bmatrix}
Q_i^{-2} & 0 \\
0 & -1
\end{bmatrix}
g_i^{-1},
\end{align}
where $g_i$ is defined in \eqref{eq:gi_homoge}.
Let us denote the discriminant of \eqref{eq:ellip_chara} as $D_{ij}(x)$.
The discriminant $D_{ij}(x)$ has two properties: (i) if ellipses $i$ and $j$ do not overlap, $D_{ij}(x)>0$ holds and (ii) if ellipses $i$ and $j$ contact with each other without overlapping, $D_{ij}(x)=0$ holds.
The work \cite{Ibuki22} presents a CBF
\begin{align}
    h_{discr} = D_{ij}(x)
\end{align}
to leverage the above property of $D_{ij}(x)$ for collision avoidance.
For more detail, please see \cite{Choi2005, Ibuki22}.
Note that, differently from the proposed CBF $h_{ij}$, the CBF $h_{discr} = D_{ij}(x)$ cannot be extended to three-dimensional ellipsoidal rigid bodies because it requires solving the higher-dimensional characteristic equations, the solution of which is divided by cases or not obtained in explicit form.

In the following comparative studies, we consider a scenario in which the robot, rigid body~1, moves toward a goal while avoiding an obstacle, rigid body~2. 
The shapes of rigid bodies are modeled as ellipses, which sizes are specified as $Q_1 = \diag(0.5, 0.2)$ and $Q_2=\diag(3, 2)$, respectively. 
The initial and the goal pose of the robot are $(\bm{p}_{\rm ini}, R_{\rm ini}) = ([12~0]^\top, I_2)$ and $(\bm{p}_{\rm goal}, R_{\rm goal}) = ([0~0]^\top, I_2)$. 
The nominal input to $\bm{z}_{12}$ is set to $u_{{\rm nom}, \bm{z}_{12}} = 10 (\partial h_{12} / \partial \bm{z}_{12})$. We set $\beta_{\bm v} = 1$ and $\beta_{\bm{\omega}} = 0.3$ for a weighting matrix $W$.
As an extended class $\mc K$ function, we utilize $\alpha(h) = \gamma h$ for both CBFs, where we analyze how the result changes in each CBF with $\gamma = 1, 10, 100$.


\begin{figure}[t!]
    \centering
    \subfloat[Trajectory]
    {\makebox[0.53\hsize][c]{\includegraphics[trim = 0.3cm 0cm 1cm 0.7cm, clip=true, width=0.53\linewidth]{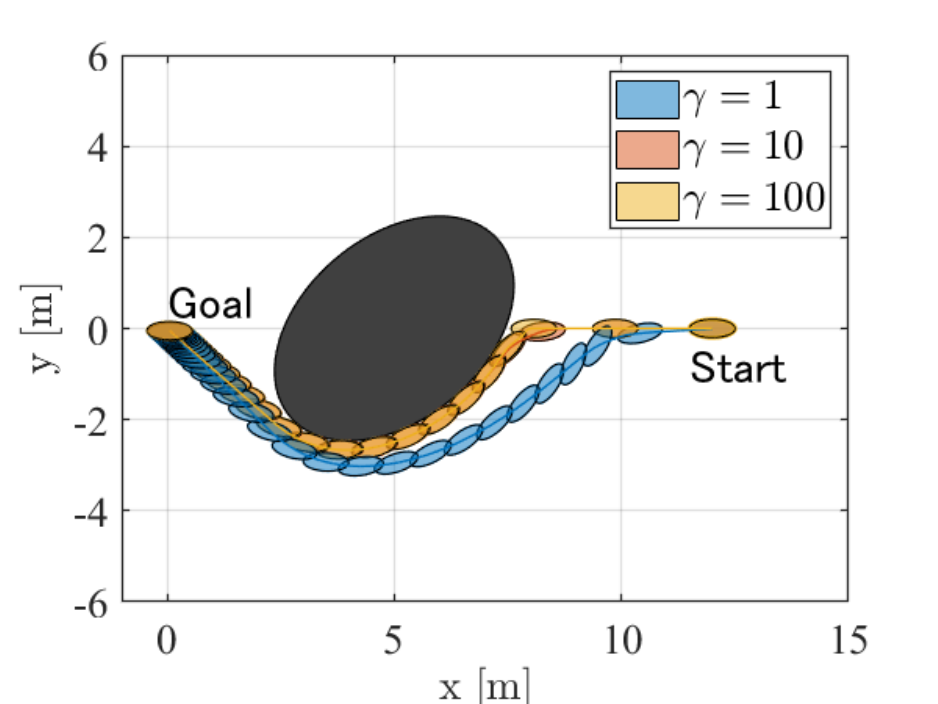}
    \label{subfig:sim_2D_rot_traj}}} 
    \subfloat[Evolution of the proposed CBF]
    {\makebox[0.46\hsize][c]{\includegraphics[trim = 0.2cm 0cm 0.5cm 0.2cm, clip=true, width=0.46\linewidth]{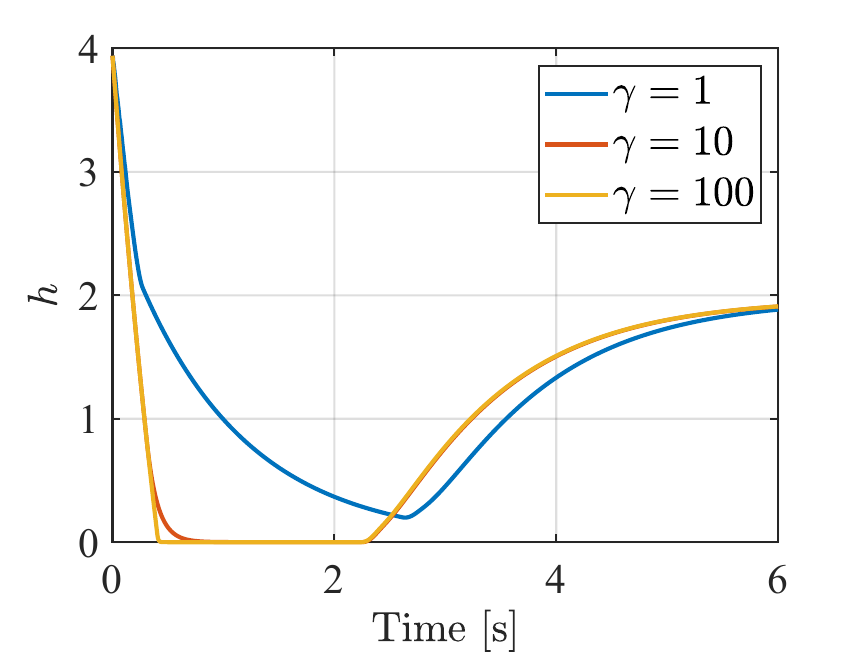}
    \label{subfig:sim_2D_rot_h}}}
    \caption{Simulation results of the proposed CBF with $\gamma =1, 10, 100$. (a): All three cases change their attitudes in the direction smooth avoidance is achieved. (b): The value of the proposed CBF $h$ corresponds with the actual distance between the supporting hyperplane and the obstacle.}
    \label{fig:sim_2D_rot}
\end{figure}

\begin{figure}[t!]
    \centering
    \subfloat[Trajectory]
    {\makebox[0.53\hsize][c]{\includegraphics[trim = 0.3cm 0cm 1cm 0.7cm, clip=true, width=0.53\linewidth]{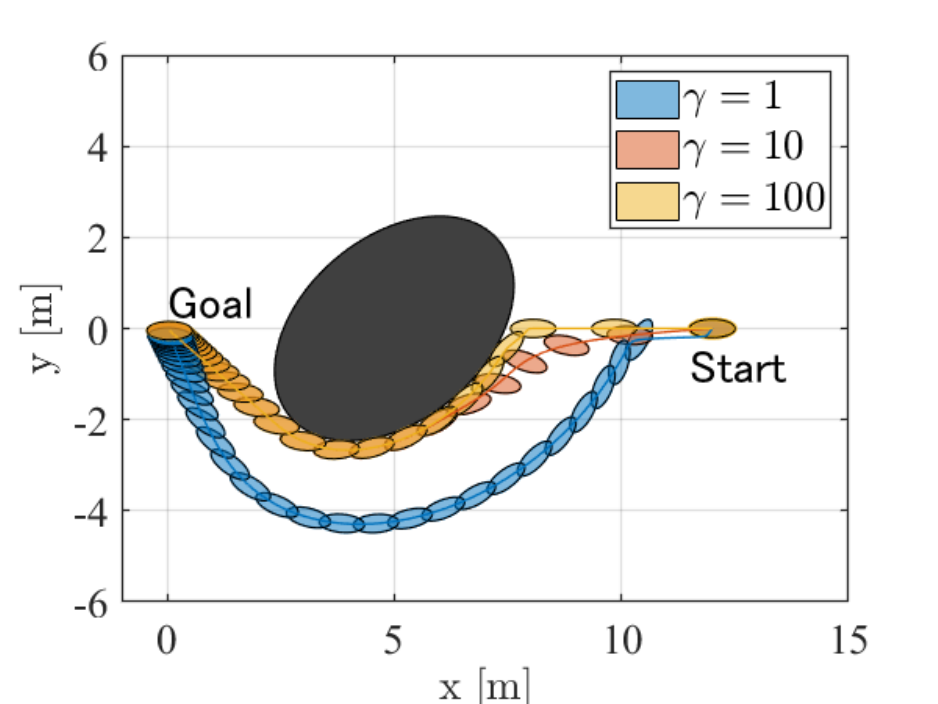}
    \label{subfig:sim_2D_dis_traj}}} 
    \subfloat[Evolution of the CBF in \cite{Ibuki22}]
    {\makebox[0.46\hsize][c]{\includegraphics[trim = 0.2cm 0cm 1cm 0.1cm, clip=true, width=0.46\linewidth]{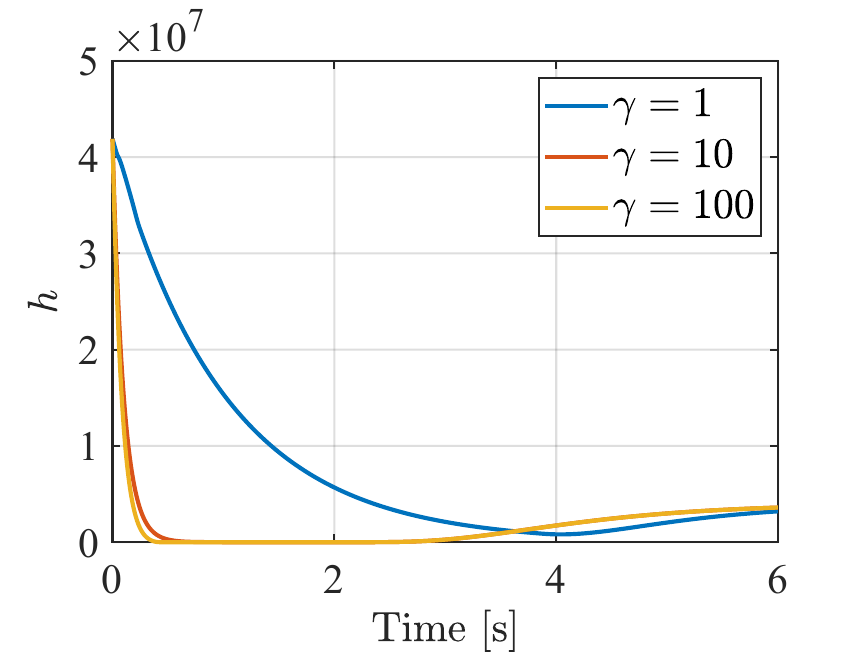}
    \label{subfig:sim_2D_dis_h}}}
    \caption{Simulation results with the CBF proposed in~\cite{Ibuki22}, where the discriminant of the cubic characteristic equation is employed as a CBF. (a): With $\gamma = 10$ shown in red, the robot shows an unreasonable change of its attitude, which does not contribute to securing a margin between the obstacle. The result with $\gamma = 1$ shows a sharper change in its attitude with a larger detour than that of the proposed method in Fig.~\ref{fig:sim_2D_rot}. (b): A CBF in \cite{Ibuki22} does not correspond with the distance, resulting in enormous values.}
    \label{fig:sim_2D_dis}
\end{figure}

Figures~\ref{fig:sim_2D_rot}\subref{subfig:sim_2D_rot_traj} and \ref{fig:sim_2D_dis}\subref{subfig:sim_2D_dis_traj} show the trajectories of the robot with the proposed CBF $h_{12}$ and one with $h_{discr} = D_{12}$ in \cite{Ibuki22}, respectively.
Along with each trajectory, we plot a pose of the robot every 0.2\,s.
In all six trajectories, the robot successfully avoids the collision with the obstacle shown in black. 
Nevertheless, the results of the CBF in \cite{Ibuki22} exhibit unreasonable rotational inputs when the robot approaches the obstacle, as shown in red ellipses in Fig.~\ref{fig:sim_2D_dis}\subref{subfig:sim_2D_dis_traj}. 
More concretely, the robot changes its attitude in the direction not to contribute to collision avoidance. 
This unreasonable change in the attitude appears because the discriminant utilized in $h_{discr}$ does not necessarily have a proportional relationship with the Euclidean distance between ellipses. 
In addition, since the physical interpretation of $h_{discr}$ is difficult to understand, the tuning of parameters is prone to be complicated, as seen in Fig.~\ref{fig:sim_2D_dis}\subref{subfig:sim_2D_dis_traj}, in which all three trajectories vary greatly. 
The evolution of $h_{discr}$, shown in Fig.~\ref{fig:sim_2D_dis}\subref{subfig:sim_2D_dis_h}, also exhibits difficulties in understanding its physical interpretation because its value takes a significantly larger value than that of the distance.
In contrast, the robots with the proposed CBF $h_{12}$ change their attitudes so that the distance from the obstacle increases. 
Figure~\ref{fig:sim_2D_rot}\subref{subfig:sim_2D_rot_h} also verifies that the value of CBF takes almost the same value as the distance.

\subsection{Three-Dimensional Case}

This subsection demonstrates the proposed collision avoidance method \eqref{eq:QP_N} with the rigid body network in a 3D space. 
The rigid body network is composed of sixteen ellipsoidal rigid bodies, eight of which are located at the lower-left corner and the others at the upper-right corner at the initial time, as shown in Fig.~\ref{fig:sim_3D_snap}\subref{subfig:sim_3D_00s}. 
Each group moves toward the other corner of the space while rearranging the positions of the rigid bodies in each group.
The rearranged positions in the goal configuration and the ID of rigid bodies are randomly set.
The attitude of each ellipsoidal rigid body is determined so that its major axis directs to the goal position. 
The length of the major axis of all rigid bodies is set to 1\,m, while the length of the other axes is determined randomly from $0.3$\,m to $0.7$\,m.
With denoting the goal pose for rigid body~$i$ as $(\bm{p}_i^{\rm goal}, R_i^{\rm goal})$, the nominal inputs for the translational and angular body velocities are calculated as $\bm{v}_{{\rm nom}, i} = k_{\bm{v}} R_i^\top ( \bm{p}_i^{\rm goal} - \bm{p}_i )$ and 
${\bm \omega}_{{\rm nom}, i} = \frac{k_{\bm{\omega}}}{2} \left( R_i^\top R_i^{\rm goal} - (R_i^\top R_i^{\rm goal})^\top \right)^\vee$, where $k_{\bm{v}} = 3$, $k_{\bm{\omega}} = 0.5$, and $\vee$ is the inverse operator of $\wedge$ as defined below \eqref{eq:wedge}.
The weighting matrix $W$ is set with $\beta_{\bm{v}}=0.1$ and $\beta_{\bm{\omega}}=1$.
The gain for the nominal input \eqref{eq:grad_ascent} for the supporting hyperplane is set to $k_{\bm z} = 20/(1+h_{ij}^2)$, which is designed to take a smaller value when two ellipsoids are located far away.
The extended class $\mc K$ function is $\alpha(h) = 10h$.

\begin{figure}[t!]
    \vspace{-0.2cm}
    \centering
    \subfloat[$t=0$\,s]
    {\makebox[0.48\hsize][c]{\includegraphics[trim = 0.3cm 0cm 0.5cm 0.7cm, clip=true, width=0.48\linewidth]{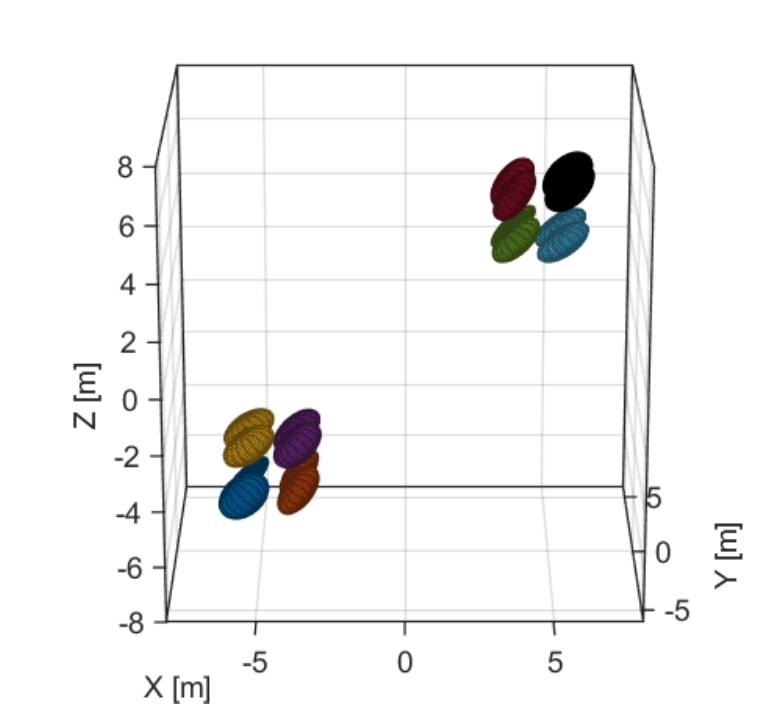}
    \label{subfig:sim_3D_00s}}} 
    \subfloat[$t=0.3$\,s]
    {\makebox[0.48\hsize][c]{\includegraphics[trim = 0.3cm 0cm 0.5cm 0.7cm, clip=true, width=0.48\linewidth]{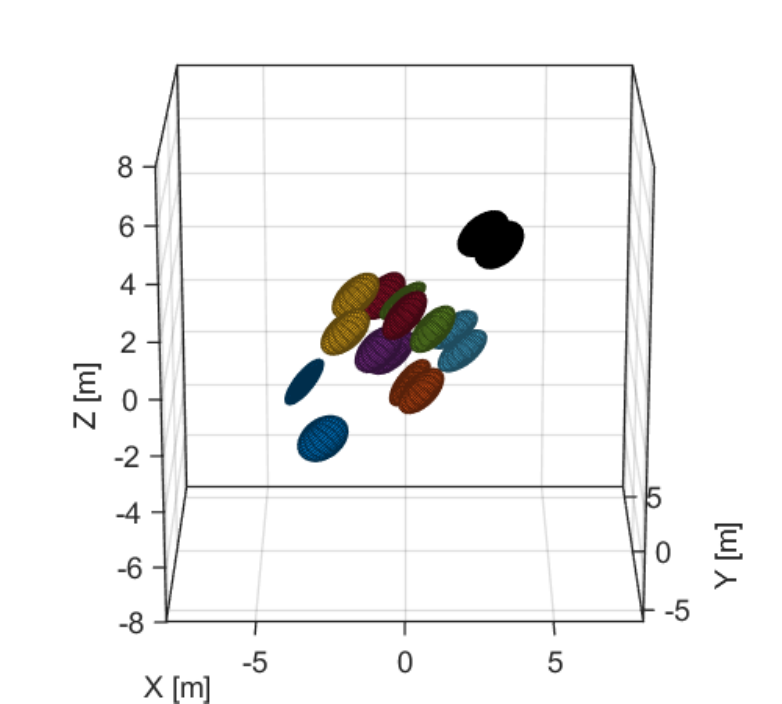}
    \label{subfig:sim_3D_03s}}}\\
    \subfloat[$t=0.5$\,s]
    {\makebox[0.48\hsize][c]{\includegraphics[trim = 0.3cm 0cm 0.5cm 0.7cm, clip=true, width=0.48\linewidth]{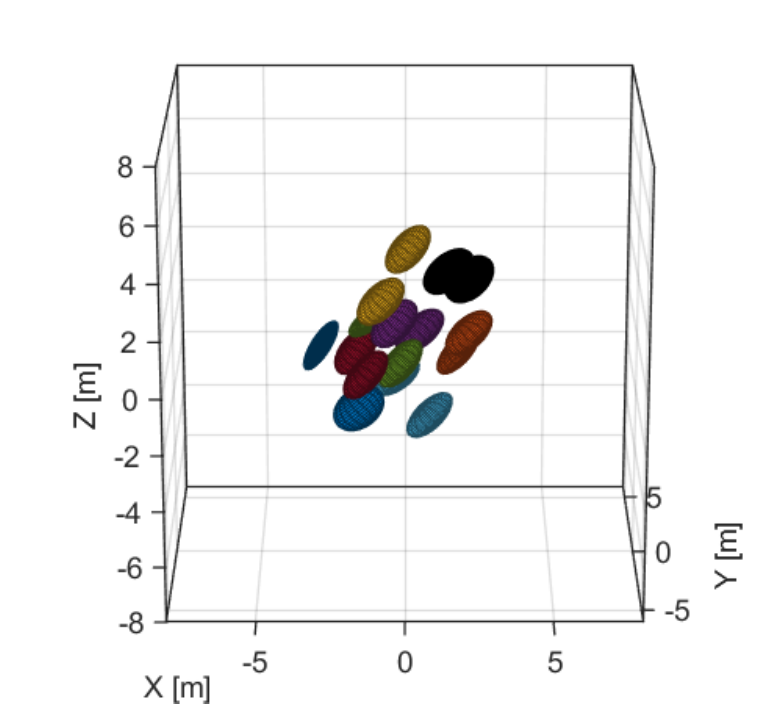}
    \label{subfig:sim_3D_05s}}} 
    \subfloat[$t=0.9$\,s]
    {\makebox[0.48\hsize][c]{\includegraphics[trim = 0.3cm 0cm 0.5cm 0.7cm, clip=true, width=0.48\linewidth]{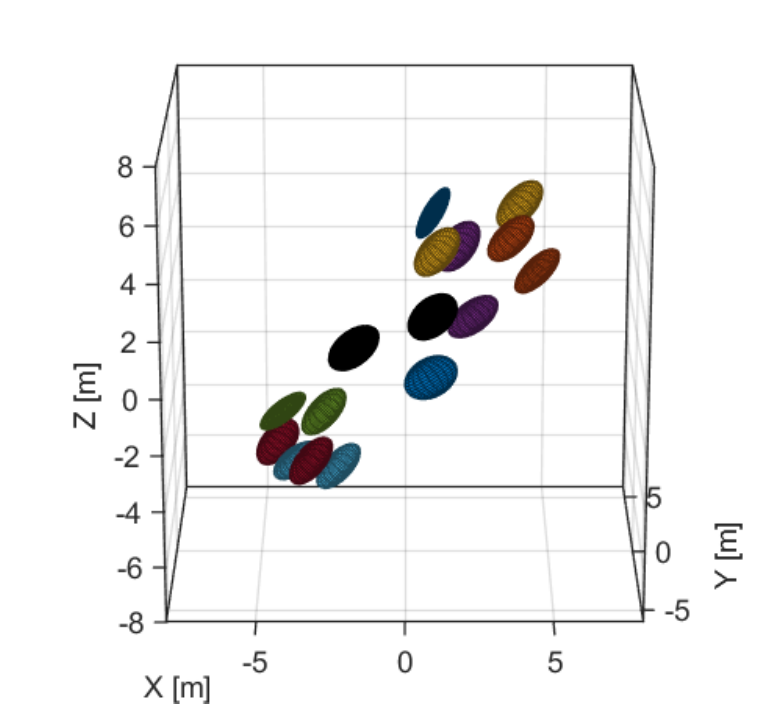}
    \label{subfig:sim_3D_09s}}} \\
    \subfloat[$t=1.2$\,s]
    {\makebox[0.48\hsize][c]{\includegraphics[trim = 0.3cm 0cm 0.5cm 0.7cm, clip=true, width=0.48\linewidth]{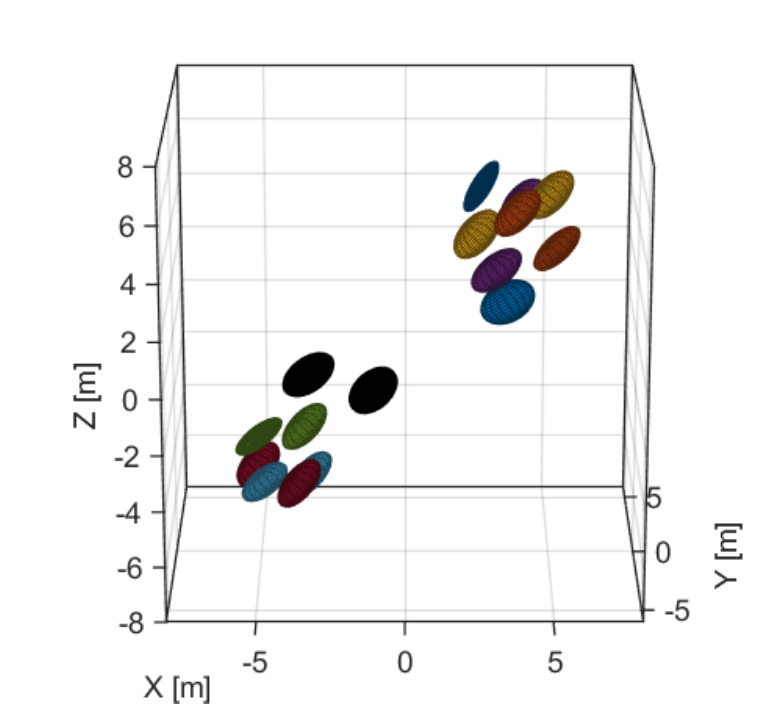}
    \label{subfig:sim_3D_12s}}} 
    \subfloat[$t=4.0$\,s]
    {\makebox[0.48\hsize][c]{\includegraphics[trim = 0.3cm 0cm 0.5cm 0.7cm, clip=true, width=0.48\linewidth]{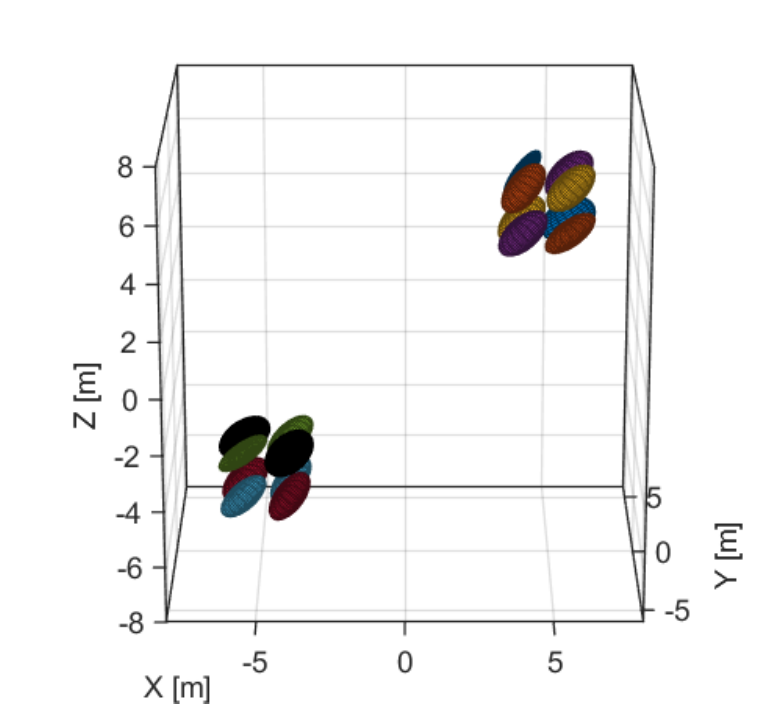}
    \label{subfig:sim_3D_40s}}}
    \caption{Snapshots of the simulation with the proposed collision avoidance method. The sixteen ellipsoidal rigid bodies are separated into two groups, which are located at the lower-left and the upper-right corner, respectively.} 
    \label{fig:sim_3D_snap}
\end{figure}

\begin{figure}
    \centering
    \includegraphics[trim = 0cm 0cm 0cm 0.5cm, clip=true, width=50mm]{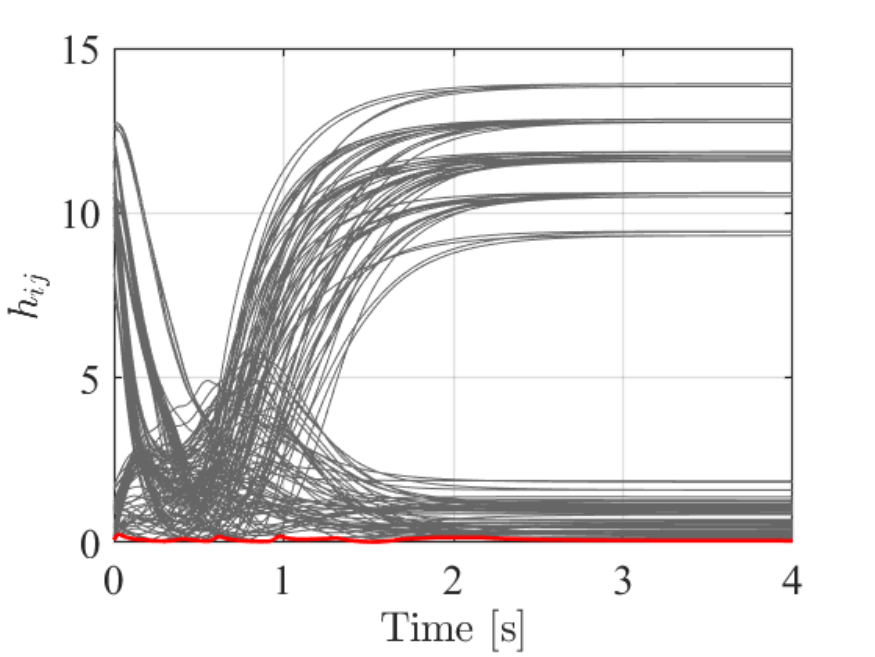}
    \caption{Evolution of CBFs $h_{ij}, \forall j \in \mc N \backslash \{ i\}, \forall i \in \mc N$. $h_{ij}$ taking the minimum value is highlighted as a red line, which keeps positive values.}
    \label{fig:sim_3D_CBF}
\end{figure}

Figure~\ref{fig:sim_3D_snap} shows the snapshots of the simulation result.
At $t=0.3$\,s and $t=0.5$\,s, two groups meet at the center of the field, where the proposed CBF rectifies the nominal input to prevent collision. 
During this time period, the CBFs, depicted in Fig.~\ref{fig:sim_3D_CBF}, take smaller values but remain positive, achieving collision avoidance.
After $t=0.9$\,s, each group approaches the final destination, where the position of each rigid body in the group is shuffled from the one in the initial coordination.
Still, all the rigid bodies smoothly converge to their final destinations while preventing collisions.

\section{Extension to the 2D Vehicle Model} \label{sec:vehicle}

\begin{figure}
    \centering
    \includegraphics[width = 40mm]{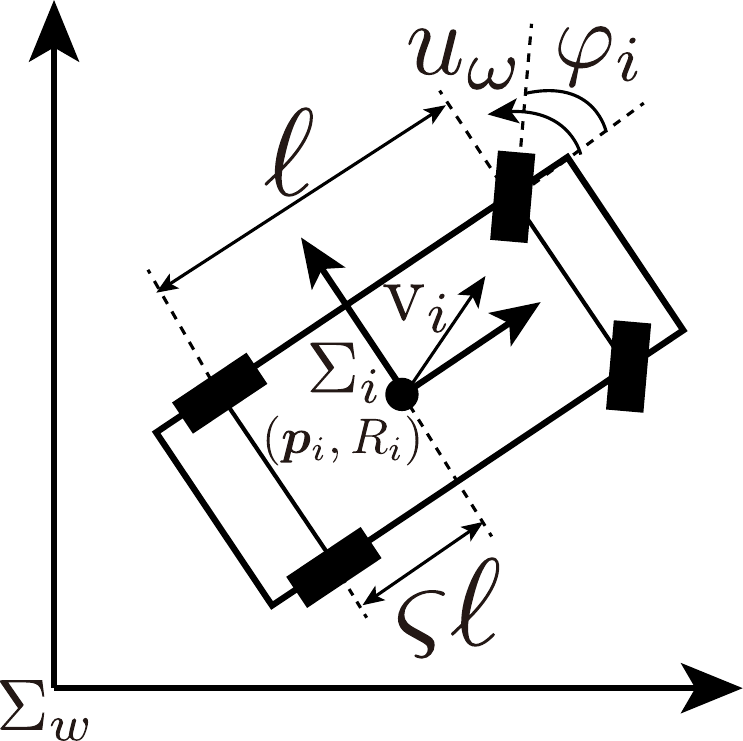}
    \caption{Vehicle model. The control input is the acceleration of the coordinate frame $\Sigma_i$, denoted as $\dot {\rm v}_i = u_{ai}$, and steering angular velocity $\dot{\varphi}_i = u_{\omega i}$.}
    \label{fig:vehicle}
\end{figure}

This section demonstrates that the proposed method achieves collision avoidance for a vehicle with nonholonomic dynamics by applying the methodology introduced in Section~\ref{sec:Preliminary} to the proposed CBF.
The considered vehicle model is described in Fig.~\ref{fig:vehicle}, where $(\bm{p}_i, R_i)$ is the relative pose of $\Sigma_i$ with respect to $\Sigma_w$.
$\varphi_i$ is the steering angle and ${\rm v}_i$ signifies the translational velocity of the center of mass. 
The length of the wheelbase is $\ell$, and the center of mass is located at a distance of $\varsigma \ell$ from the rear axle.
The control input is the acceleration, namely $\dot {\rm v}_i := u_{ai}$, and the steering angular velocity $\dot \varphi_i := u_{\omega i}$.
With assuming there is no wheel slip and applying the bicycle model as discussed in \cite[Chapter~2]{Rajamani2011}, the dynamics of the vehicles are expressed as 
\begin{subequations} \label{eq:vehicle_dynamics}
\begin{align}
\dot g_{i} &= g_{i} \hat V_{i}^b,~ \hat V_{i}^b = 
    \begin{bmatrix}
    \hat{\bm{\omega}}_i & \bm{v}_i \label{eq:vehicle_dynamics_RBM} \\
    0 & 0
    \end{bmatrix}, \\
\bm{v}_i &= \frac{{\rm v}_i}{\sqrt{1 + \left( \varsigma \tan{\varphi_i} \right)^2}} 
\begin{bmatrix}
    1 \\
    \varsigma \tan{\varphi_i}
\end{bmatrix}, \label{eq:vehicle_vel} \\
\bm{\omega}_i &= \frac{{\rm v}_i \tan{\varphi_i}}{\ell \sqrt{1 + \left( \varsigma \tan{\varphi_i} \right)^2}}, \label{eq:vehicle_omega} \\
\dot {\rm v}_i &= u_{ai}, \\
\dot \varphi_i &= u_{\omega i},
\end{align}
\end{subequations}
where we expressed the dynamics in \cite{Rajamani2011} by employing \eqref{eq:vehicle_dynamics_RBM} to suit this paper.

Differently from the dynamics \eqref{eq:dyn_RBM} considered in the previous section, the vehicle dynamics \eqref{eq:vehicle_dynamics} have relative degree two, i.e., the input $[u_{ai}~u_{\omega i}]^\top$ shows up in $\ddot h_{ij}$ but not in $\dot h_{ij}$.
To match the relative degree of the supporting hyperplane with the vehicle dynamics, we modify the dynamics of the supporting hyperplane as 
\begin{align} \label{eq:dyn_supporting_line_deg2}
\begin{bmatrix}
    \dot{\bm{z}}_{ij} \\ \dot{\bm{r}}_{ij}
\end{bmatrix} = 
\begin{bmatrix}
    0 & (I_d - \bm{z}_{ij} \bm{z}_{ij}^\top) \\
    0 & 0
\end{bmatrix}
\begin{bmatrix}
    \bm{z}_{ij} \\ \bm{r}_{ij}
\end{bmatrix}
+ \begin{bmatrix}
0 \\ I_d
\end{bmatrix}\bm{u}_{\bm{r}_{ij}}.
\end{align}
The nominal input for the supporting hyperplane is designed~as
\begin{align} \label{eq:input_supp_deg2}
    \bm{u}_{\mathrm{nom},\bm{r}_{ij}} = k_{\bm z} \frac{d}{dt} \left( \frac{\partial h_{ij}}{\partial \bm{z}_{ij}} \right),~k_{\bm z} > 0
\end{align}
with the intention that the input \eqref{eq:input_supp_deg2} integrated through the dynamics \eqref{eq:dyn_supporting_line_deg2} resembles the one in \eqref{eq:grad_ascent}, hence maximizing the distance between the supporting hyperplane and the other vehicle.
Similar to $\bm{z}_{i}=[\bm{z}_{i\hspace{0.4mm}i+1}^\top~\bm{z}_{i\hspace{0.4mm}i+2}^\top \cdots \bm{z}_{i\hspace{0.4mm}n}^\top]^\top$ introduced in the last paragraph of Section~\ref{ssec:Propose_method}, we introduce the augmented vector 
$\bm{r}_{i}=[\bm{r}_{i\hspace{0.4mm}i+1}^\top \cdots \bm{r}_{i\hspace{0.4mm}n}^\top]^\top$ and $\bm{u}_{\bm{r}_i} = [\bm{u}_{\bm{r}_{i\hspace{0.4mm}i+1}}^\top \cdots \bm{u}_{\bm{r}_{i\hspace{0.4mm}n}}^\top]^\top$.
Hereafter, we consider $X_i = (\bm{p}_i, R_i, {\rm v}_i, \varphi_i, \bm{z}_{i}, \bm{r}_{i})$ and $\bm{u}_i = [\bm{v}_{i}^\top~\bm{\omega}_i^\top~\bm{u}_{\bm{r}_{i}}^\top ]^\top$ as the state and input of rigid body~$i$.


The change of the dynamics of the supporting hyperplane replaces $\bm{u}_{\bm{z}_{ij}}$ in $\dot h_{ij}$, shown in \eqref{eq:dot_h_2D} in the case of $d=2$, into $\bm{r}_{ij}$ as follows, while other parts remain in the same value.
\begin{align} \label{eq:dot_h_2D_deg_two}
\begin{split}
    \dot{h}_{ij} &= \tilde{\bm{\zeta}}_{ij} \bm{\omega}_i + \bm{\eta}_{ij} R_i \bm{v}_i 
    \!+\! \bm{\mu}_{ij} \left(I_d-\bm{z}_{ij} \bm{z}_{ij}^\top \right) \bm{r}_{ij} \\
    &+ \tilde{\bm{\nu}}_{ij} \bm{\omega}_j + \bm{\xi}_{ij} R_j \bm{v}_j
\end{split}
\end{align}
Because the equation \eqref{eq:dot_h_2D_deg_two} contains neither the input $\bm{u}_i$ nor $\bm{u}_j$, $\dot{h}_{ij}$ cannot be served as a constraint for the input. Hence, following the concept in Section~\ref{sec:Preliminary}, we employ the following CBF for the considered vehicle model.
\begin{align} \label{eq:CBF_veh_ECBF}
    h_{ij,veh} = \dot h_{ij} + h_{ij}
\end{align}
Then, the constraint to achieve collision avoidance between rigid bodies $i$ and $j$ can be expressed as follows.
\begin{align} \label{eq:CBF_diff_veh_ECBF}
    \dot h_{ij,veh} = \ddot h_{ij} + \dot h_{ij} \geq -h_{ij,veh}
\end{align}

Let us take a closer look of \eqref{eq:CBF_diff_veh_ECBF} to confirm the condition \eqref{eq:CBF_diff_veh_ECBF} can be evaluated in a distributed manner as in \eqref{eq:QP_dist_i} and \eqref{eq:QP_dist_j}.
While $\dot h_{ij}$ is given in \eqref{eq:dot_h_2D_deg_two}, $\ddot h_{ij}$ is calculated as 
\begin{align}
\begin{split} \label{eq:ddot_h}
    \ddot h_{ij} &= \tilde{\bm{\zeta}}_{ij} \dot{\bm{\omega}}_i + \bm{\eta}_{ij}R_i \dot{\bm{v}}_i + \bm{\mu}_{ij} \left(I_d-\bm{z}_{ij} \bm{z}_{ij}^\top \right) \dot{\bm{r}}_{ij} \\ 
    &+ \tilde{\bm{\nu}}_{ij} \dot{\bm{\omega}}_j +  \bm{\xi}_{ij} R_j \dot{\bm{v}}_j + A (X_i, X_j),
\end{split} \\
\begin{split}
    A(X_i, X_j) &= \dot{\tilde{\bm{\zeta}}}_{ij} \bm{\omega}_i \!+\! \dot{\bm{\eta}}_{ij}R_i \bm{v}_i \!+\! \bm{\eta}_{ij} \dot{R}_i \bm{v}_i   \\
    &+\! \dot{\bm{\mu}}_{ij} \left(I_d\!-\!\bm{z}_{ij} \bm{z}_{ij}^\top \right) \bm{r}_{ij} \!+\! \bm{\mu}_{ij} \left(-2\bm{z}_{ij} \bm{z}_{ij}^\top \right) \bm{r}_{ij} \\
    &+ \dot{\tilde{\bm{\nu}}}_{ij} \bm{\omega}_j +  \dot{\bm{\xi}}_{ij} R_j \bm{v}_j + \bm{\xi}_{ij} \dot{R}_j \bm{v}_j,
\end{split}
\end{align}
where the term $A(X_i, X_j)$ only depends on the states of rigid bodies $i$ and $j$ and is independent of their control inputs.
Then, let us divide \eqref{eq:CBF_diff_veh_ECBF} into the following two conditions, where we substitute \eqref{eq:ddot_h} into \eqref{eq:CBF_diff_veh_ECBF}.
\begin{subequations} \label{eq:CBF_diff_veh_sep}
\begin{align}
\begin{split} \label{eq:CBF_diff_veh_sep_i}
    &\tilde{\bm{\zeta}}_{ij} \dot{\bm{\omega}}_i + \bm{\eta}_{ij}R_i \dot{\bm{v}}_i + \bm{\mu}_{ij} \left(I_d-\bm{z}_{ij} \bm{z}_{ij}^\top \right) \dot{\bm{r}}_{ij} \\ 
    &\hspace{8mm}\geq - \frac{1}{2}\left(A(X_i, X_j) + \dot h_{ij} + h_{ij,veh} \right)
\end{split}\\
    &\tilde{\bm{\nu}}_{ij} \dot{\bm{\omega}}_j \!+\!  \bm{\xi}_{ij} R_j \dot{\bm{v}}_j  \!\geq\! - \frac{1}{2}\left(A(X_i, X_j) \!+\! \dot h_{ij} \!+\! h_{ij,veh} \right)\label{eq:CBF_diff_veh_sep_j}
\end{align}
\end{subequations} 
Because $(\dot{\bm{\omega}}_i, \dot{\bm{v}}_i)$ and $(\dot{\bm{\omega}}_j, \dot{\bm{v}}_j)$ can be derived by differentiating \eqref{eq:vehicle_vel} and \eqref{eq:vehicle_omega} with time, the control inputs $\bm{u}_i$ and $\bm{u}_j$ appear in the left-hand side of \eqref{eq:CBF_diff_veh_sep_i} and \eqref{eq:CBF_diff_veh_sep_j}, respectively. 
Because each rigid body needs to calculate $A(X_i,X_j)$ and $\dot h_{ij}$, we need a similar requirement for communicated information to Assumption~\ref{ass:commu} to achieve distributed computation.
\begin{assumption} \label{ass:commu_deg2}
    Rigid body~$i\in \mc N$ can acquire $(\bm{p}_j, R_j, {\rm v}_j, \varphi_j)$ and $Q_j, \forall j\in \mc N\backslash \{i\}$. In addition, rigid body~$j$ can receive $[\bm{z}_{ij}^\top~ \bm{r}_{ij}^\top]^\top$ from rigid bodies $i,~\forall i\in \{1 \cdots j-1\}$.
\end{assumption}
Assumption~\ref{ass:commu_deg2} requires each vehicle to communicate the state describing its dynamics~\eqref{eq:vehicle_dynamics}. This is a similar requirement to Assumption~\ref{ass:commu}, which necessitates the communication of $(\bm{p}_i, R_i)$, the state of RBM \eqref{eq:dyn_RBM}. 


With Assumption~\ref{ass:commu_deg2}, the QP evaluated by rigid body $i$ is represented as 
\begin{subequations} 
\begin{align}
    &\bm{u}_{i}^* = \argmin_{\bm{u}_{i}}~\left\|\bm{u}_{i}-\bm{u}_{\mathrm{nom},i}\right\|_W^2 \\
    \begin{split}
    &\mbox{s.t.}~\tilde{\bm{\zeta}}_{ij} \dot{\bm{\omega}}_i + \bm{\eta}_{ij}R_i \dot{\bm{v}}_i + \bm{\mu}_{ij} \left(I_d-\bm{z}_{ij} \bm{z}_{ij}^\top \right) \dot{\bm{r}}_{ij} \\
    &\hspace{8mm}\geq - \frac{1}{2}\left(A(X_i, X_j) + \dot h_{ij} + h_{ij,veh} \right),~ \forall j>i,
    \end{split} \\
    \begin{split}
    &\hspace{5mm} \tilde{\bm{\nu}}_{ij} \dot{\bm{\omega}}_j +  \bm{\xi}_{ij} R_j \dot{\bm{v}}_j  \\
    &\hspace{8mm}\geq - \frac{1}{2}\left(A(X_i, X_j) + \dot h_{ij} + h_{ij,veh} \right),~ \forall j<i,
    \end{split}
\end{align}
\end{subequations}
with $W = \diag(\beta_{\rm v}, \beta_{\varphi}, I_{2d(n-i)})$.

\begin{figure}[t!]
    \vspace{-0.2cm}
    \centering
    \subfloat[$t=0$\,s]
    {\makebox[0.48\hsize][c]{\includegraphics[trim = 1.4cm 0cm 2.6cm 0.5cm, clip=true, width=0.48\linewidth]{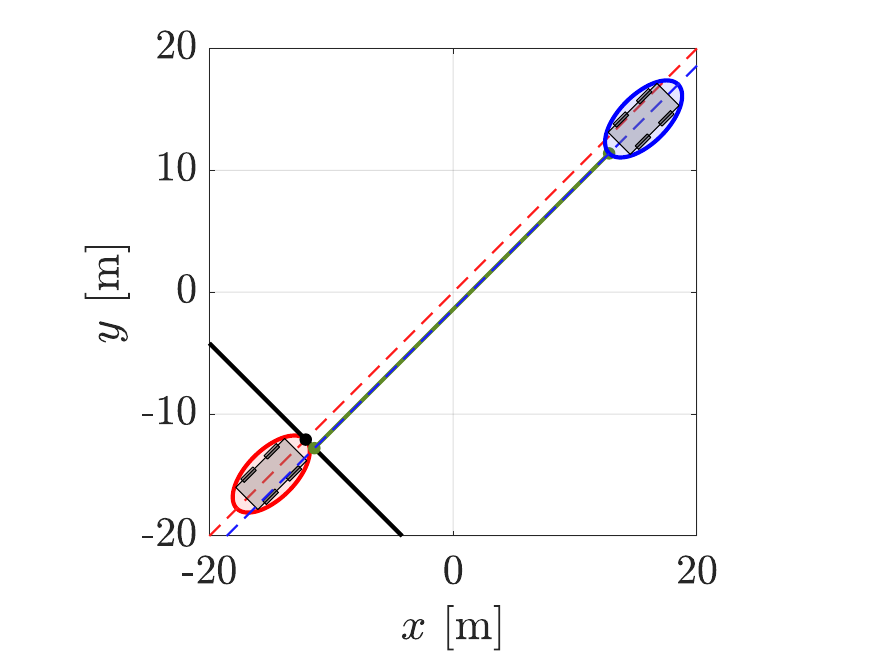}
    \label{subfig:sim_vehicle_0s}}} 
    \subfloat[$t=2$\,s]
    {\makebox[0.48\hsize][c]{\includegraphics[trim = 1.4cm 0cm 2.6cm 0.5cm, clip=true, width=0.48\linewidth]{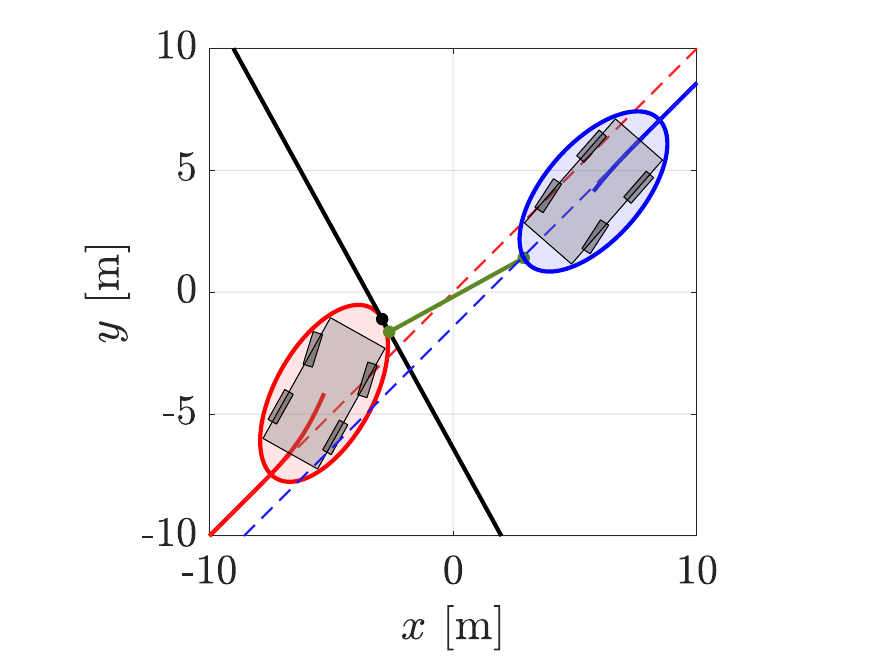}
    \label{subfig:sim_vehicle_2s}}}\\
    \subfloat[$t=4$\,s]
    {\makebox[0.48\hsize][c]{\includegraphics[trim = 1.4cm 0cm 2.6cm 0.5cm, clip=true, width=0.48\linewidth]{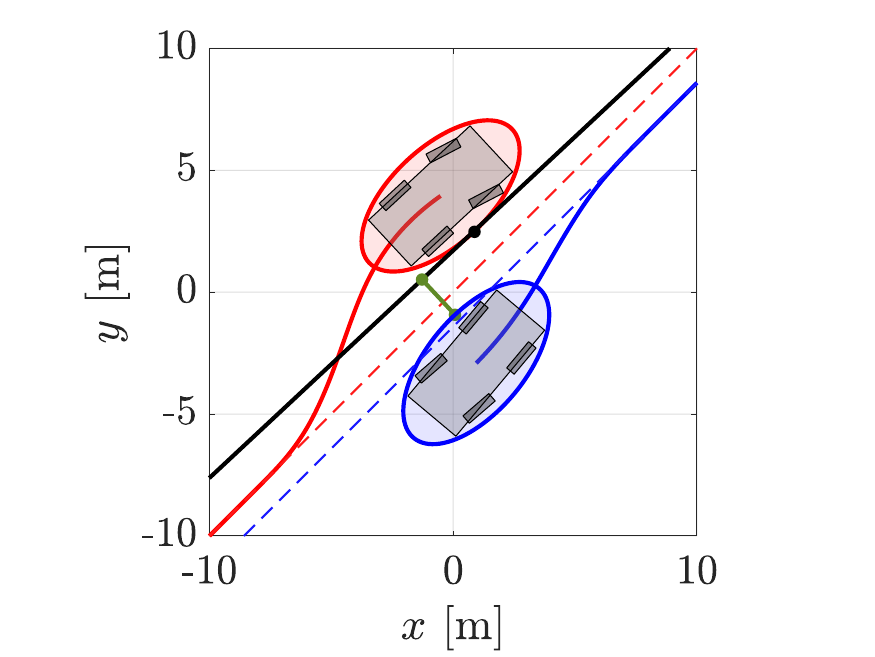}
    \label{subfig:sim_vehicle_4s}}} 
    \subfloat[$t=8$\,s]
    {\makebox[0.48\hsize][c]{\includegraphics[trim = 1.4cm 0cm 2.6cm 0.5cm, clip=true, width=0.48\linewidth]{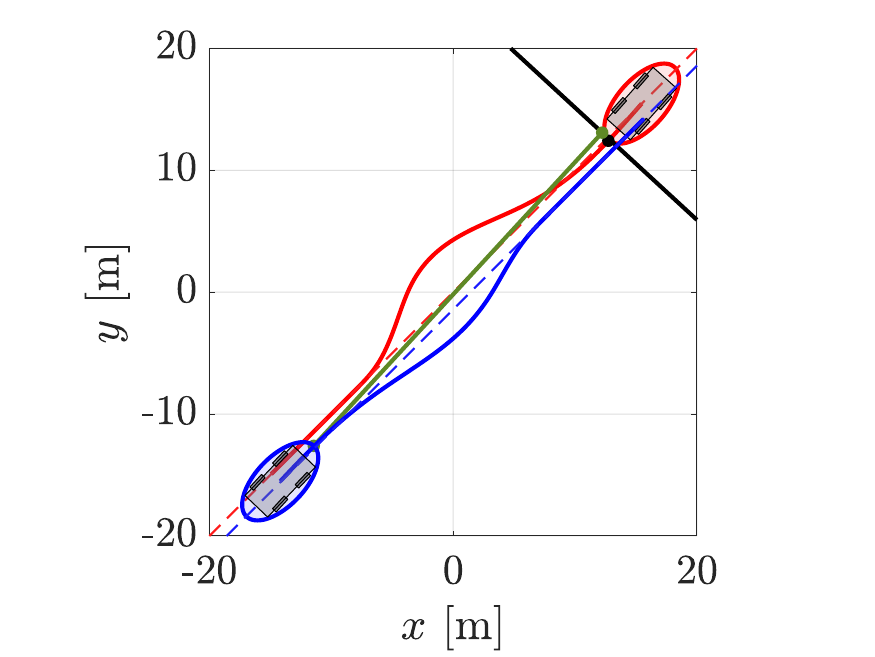}
    \label{subfig:sim_vehicle_8s}}} 
    \caption{Snapshots of the simulation with the two vehicles having the nonholonomic dynamics \eqref{eq:vehicle_dynamics}. Each vehicle traces a dashed line depicted in the same color as itself. Note that the snapshots (b) and (c) are zoomed in to detail how the vehicles avoid a collision.}
    \label{fig:sim_vehicle_snap}
\end{figure}

\begin{figure}
    \centering
    \includegraphics[width=50mm]{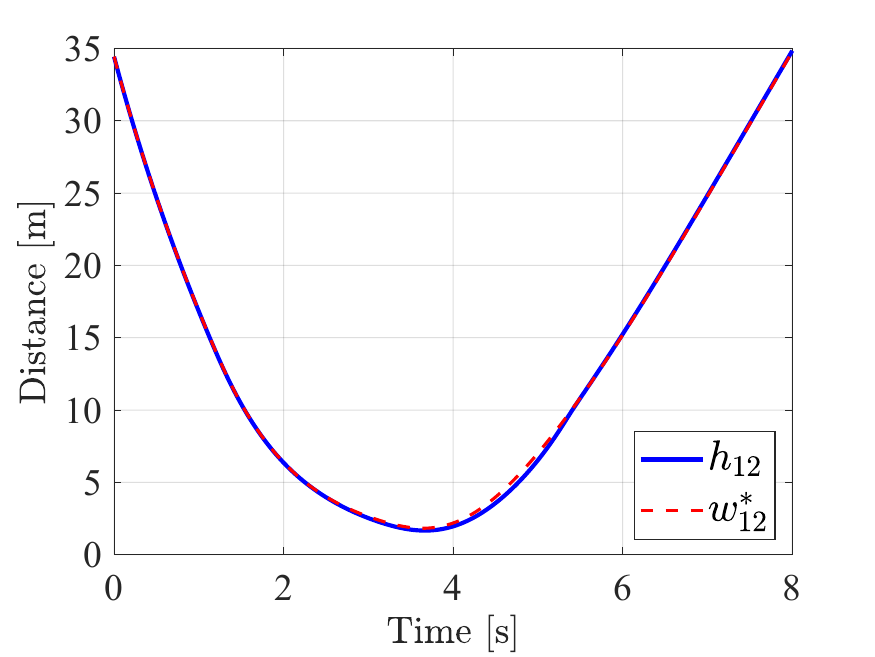}
    \caption{Evolution of CBF $h_{12}$ and the actual distance between vehicles $w_{12}^*$.}
    \label{fig:sim_vehicle_CBF}
\end{figure}

Having extended the proposed method to achieve collision avoidance for the rigid bodies with the vehicle dynamics \eqref{eq:vehicle_dynamics}, we are now ready to conduct a simulation to demonstrate the effectiveness of the method. 
Figure~\ref{fig:sim_vehicle_snap}\subref{subfig:sim_vehicle_0s} shows the initial configurations of the two vehicles, where each vehicle traces a dashed line depicted in the same color as itself.
The parameters of the vehicle shown in Fig.~\ref{fig:vehicle} are set as $\ell = 2.7\,{\rm m}$ and $\varsigma = 0.5$, respectively.
We model these vehicles as elliptical rigid bodies defined with $Q_1 = Q_2 = \diag(4, 2)$.
We set $\beta_{\rm v}=1$ and $\beta_{\varphi}=10$. 
The gain for the nominal input \eqref{eq:input_supp_deg2} for the supporting hyperplane is set to $k_{\bm z} = 1000/(1+h_{12}^2)$, which takes smaller value when two vehicles are far away. We limit the angular velocity of the unit vector $\bm{z}_{ij}$ in the range of $[-\frac{\pi}{3}, \frac{\pi}{3}]$ to avoid too fast rotational motion of the supporting hyperplane. 
The nominal control input for the vehicle is set as $u_{{\rm nom}, ai} = -({\rm v}_i - 5)$ and $u_{{\rm nom}, \omega i} = -0.1 e_1 - e_2 - 1.5 \varphi_i$. Here, $e_1$ is the lateral position error between the center of mass of the vehicle and the line to be traced.
$e_2=\frac{1}{2}\left( R_i^\top R_i^{\rm line} - (R_i^\top R_i^{\rm line})^\top \right)^\vee$ evaluates the error between the attitude of the vehicle and the direction of the line expressed by $R_i^{\rm line}$.

Figure~\ref{fig:sim_vehicle_snap} shows the snapshots of the simulation. 
As the two vehicles approach each other at $t=2\,{\rm s}$, the supporting hyperplane starts to rotate, and each vehicle turns the steering wheel to avoid a collision. At $t=4\,{\rm s}$, collision avoidance is successfully achieved, and vehicles return to the line to be traced at $t=8\,{\rm s}$.
Figure~\ref{fig:sim_vehicle_CBF} shows the value of $h_{12}$, the length of the green line in Fig.~\ref{fig:sim_vehicle_snap}, where its value keeps positive. Figure~\ref{fig:sim_vehicle_CBF} also depicts the actual distance between two vehicles, i.e. $w_{12}^*$. We can confirm that  $h_{12}$ tracks $w_{12}^*$ fast enough to eliminate the error between them.
\section{Conclusion} \label{sec:conclusion}

This paper presented a collision avoidance method for ellipsoidal rigid bodies in $SE(2)$ and $SE(3)$, where we leverage the CBF employing the supporting hyperplane on the rigid body. 
We first formulated the problem with the rigid bodies governed by the rigid body motion. 
Then, we introduced a signed distance from the supporting hyperplane of an ellipsoid to the other ellipsoid to derive the condition that makes the ellipsoidal rigid bodies collision-free. 
However, we observed that the derived condition could render a smaller value than the actual distance between two rigid bodies if the supporting hyperplane is naively prepared. 
To prevent such a conservative evaluation, we designed the optimization problem that rotates the supporting hyperplane to maximize the signed distance from the supporting hyperplane to the other rigid body. We then proved that the maximum value of this optimization problem is equivalent to the actual distance between two ellipsoidal rigid bodies. This signed distance is leveraged as a CBF to achieve collision avoidance, where the supporting hyperplane is updated by a gradient-based input to eliminate conservativeness. The proposed CBF is implemented in a QP that allows each rigid body to compute the collision-free input in a distributed fashion under communication. The simulation studies demonstrated the validity of the proposed framework in both 2D and 3D environments. Finally, we showcased the presented CBF can be extended to the nonholonomic system.
\appendices
\section{Proof of Theorem~\ref{th:Prim_dual}} \label{Ap:Prim_dual}

This appendix proves Theorem~\ref{th:Prim_dual}. 
The dual function of the optimization problem~\eqref{eq:primal} is expressed as
\begin{subequations}
\begin{align}
\begin{aligned}
g(\lambda_i,\lambda_j,\bm{s}) = \inf_{\bm{x},\bm{y},\bm{w}}\left( \|\bm{w}\| \right.&\left.+\lambda_if_i(\bm{x}) + \lambda_jf_j(\bm{y}) \right.\\
&\left. +\bm{s}^\top(\bm{y}-\bm{x}-\bm{w})  \right)
\end{aligned}\\
=\begin{cases}
        \begin{aligned}
        \inf_{\bm{x}} &\left(\lambda_i f_i(\bm{x}) - \bm{s}^\top\bm{x}\right)\\&+ \inf_{\bm{y}} \left(\lambda_j f_j(\bm{y}) + \bm{s}^\top\bm{y}\right)
        \end{aligned} & \begin{aligned}
        \|\bm{s}\|&\leq 1,\\\lambda_i,\lambda_j &\geq 0
        \end{aligned}\\
    -\infty & \mbox{otherwise},
    \end{cases} \label{eq:Lagrangian}
\end{align}
\end{subequations}
with the Lagrange multipliers $\lambda_i$, $\lambda_j$, and $\bm{s}$.
%
Let us calculate the first term of \eqref{eq:Lagrangian}, i.e., $\inf_{\bm{x}} \left(\lambda_i f_i(\bm{x}) - \bm{s}^\top\bm{x}\right)$. Here, we introduce $\bar{\bm{x}}_i=\bar{Q}_i^{-1}\bm{x},~ \bar{\bm{p}}_i=\bar{Q}_i^{-1}\bm{p}_i$, and $\bar{\bm{s}}_i=\bar{Q}_i \bm{s}$ to simplify the equation as
\begin{align}
    &\inf_{\bm{x}} \left(\lambda_i f_i({\bm{x}}) - \bm{s}^\top{\bm{x}}\right)\nonumber\\
    =&\inf_{\bm{x}}\left( \lambda_i({\bm{x}}-\bm{p}_i)^\top\bar{Q}_i^{-2}({\bm{x}}-\bm{p}_i)-\lambda_i - \bm{s}^\top{\bm{x}}\right)\nonumber\\
    =&\inf_{\bm{x}}\left( \lam_i\left\|\bar{{\bm{x}}}_i \!-\! \left(\bar{\bm{p}}_i \!+\! \frac{1}{2\lambda_i}\bar{\bm{s}}_i\right)\right\|^2 \!-\! \bar{\bm{s}}_i^\top\bar{\bm{p}}_i \!-\! \frac{1}{4\lambda_i}\|\bar{\bm{s}}_i\|^2 \!-\! \lambda_i \right)\nonumber\\
    =&-\bar{\bm{s}}_i^\top\bar{\bm{p}}_i-\frac{\|\bar{\bm{s}}_i\|^2+4\lambda_i^2}{4\lambda_i}. \label{eq:dual_first}
\end{align}
Similarly, the second term in \eqref{eq:Lagrangian} can be transformed as
\begin{align} \label{eq:dual_second}
    \inf_{\bm{y}} \left(\lambda_j f_j(\bm{y}) + \bm{s}^\top\bm{y}\right)=\bar{\bm{s}}_j^\top\bar{\bm{p}}_j-\frac{\|\bar{\bm{s}}_j\|^2+4\lambda_j^2}{4\lambda_j},
\end{align}
with $\bar{\bm{p}}_j=\bar{Q}_j^{-1}\bm{p}_j$ and $\bar{\bm{s}}_j=\bar{Q}_j \bm{s}$.
From \eqref{eq:Lagrangian}, \eqref{eq:dual_first}, and \eqref{eq:dual_second}, the dual problem can be expressed as 
\begin{subequations}\label{eq:dual_1}
\begin{align}
    \max_{\bm{s},\lambda_i,\lambda_j}&-\bar{\bm{s}}_i^\top\bar{\bm{p}}_i -\frac{\|\bar{\bm{s}}_i\|^2+4\lambda_i^2}{4\lambda_i}
    +\bar{\bm{s}}_j^\top\bar{\bm{p}}_j-\frac{\|\bar{\bm{s}}_j\|^2+4\lambda_j^2}{4\lambda_j} \label{eq:dual_1_a}\\
    \mbox{s.t.}&~\bar{\bm{s}}_i=\bar{Q}_i\bm{s},~
    \bar{\bm{s}}_j=\bar{Q}_j \bm{s},~
    \|\bm{s}\|\leq 1,~
    \lambda_i,\lambda_j\geq 0.
\end{align}
\end{subequations}

We will next evaluate the second and fourth terms in \eqref{eq:dual_1_a}. For this goal, let us define a function $M(a, \tau)$ as
\begin{align}
    M(a,\tau)=-\frac{a+4\tau^2}{4\tau}~~~(a,\tau\geq 0),
\end{align}
and consider $\max_{a,\tau}M(a,\tau)$. 

In the case of $a>0$, the gradients of $M$ are 
\begin{align}
    \frac{\partial M}{\partial \tau} = \frac{(\sqrt{a}-2\tau)(\sqrt{a}+2\tau)}{4\tau^2},~\frac{\partial M}{\partial a} = - \frac{1}{4\tau}.
\end{align}
This implies that the function $M(a,\tau)$ has no extremum for all $a>0$, and for $\tau\geq0$ it has the maximum value at $\tau^*(a)={\sqrt{a}}/{2}$. 
Hence, the following equation holds. 
\begin{align}
    \max_{a,\tau}M(a,\tau)
    =\max_{a}-\sqrt{a}~~~(a>0)
\label{eq:a_neq_0}
\end{align}

When $a=0$ holds, the maximum value of $M(a,\tau)$ is 
\begin{align}
    \max_{a,\tau} M(a,\tau) = \max_\tau-\tau = 0.
    \label{eq:a_eq_0}
\end{align}
%
Given that \eqref{eq:a_eq_0} is equivalent to the result of substituting $a=0$ into \eqref{eq:a_neq_0}, we can unify them as 
\begin{align}
    \max_{a,\tau} M(a,\tau)=\max_{a}-\sqrt{a}~~~(a\geq 0).
    \label{eq:maxM_a_x}
\end{align}
Since the second and fourth terms in \eqref{eq:dual_1_a} are equal to $M(\|\bar{\bm{s}}_i\|^2,\lambda_i)$ and $M(\|\bar{\bm{s}}_j\|^2,\lambda_j)$, respectively, the problem \eqref{eq:dual_1} can be simplified by employing \eqref{eq:maxM_a_x} as 
\begin{subequations}\label{eq:dual_2}
\begin{align}
    \max_{\bm{s}}&-\bar{\bm{s}}_i^\top\bar{\bm{p}}_i-\left\|\bar{\bm{s}}_i\right\| + \bar{\bm{s}}_j^\top\bar{\bm{p}}_j-\left\|\bar{\bm{s}}_j\right\| \\
      \mbox{s.t.}&~\|\bm{s}\|\leq 1, \label{eq:dual_2_const2}\\
      &~\bar{\bm{s}}_i=\bar{Q}_i \bm{s},~\bar{\bm{s}}_j=\bar{Q}_j \bm{s}.
\end{align}
\end{subequations}

Let us parameterize $\bm{s}$ as 
\begin{align}
    \bm{s}&=\frac{\varrho}{\left\|\bar{Q}_i^{-1}\bm{z}_{ij}\right\|}\bar{Q}_i^{-1}\bm{z}_{ij},\label{eq:z_changed}
\end{align}
with $0\leq\varrho\leq 1$ to satisfy the constraint \eqref{eq:dual_2_const2}.
By substituting \eqref{eq:z_changed}, $\bar{\bm{s}} = \bar{Q}_i \bm{s}$ and $\bar{\bm{s}}_j = \bar{Q}_j \bm{s}$, the dual problem \eqref{eq:dual_2} can be expressed as 
\begin{subequations}\label{eq:dual_3}
\begin{align}
    \max_{\varrho,\bm{z}_{ij}}~&\varrho\underbrace{\frac{ -\left\|\bar{Q}_j \bar{Q}_i^{-1} \bm{z}_{ij}\right\| + (\bm{p}_j-\bm{p}_i)^\top\bar{Q}_i^{-1} \bm{z}_{ij} - 1}{\left\|\bar{Q}_i^{-1}\bm{z}_{ij}\right\|}}_{h_{ij}(\bm{z}_{ij})}\label{eq:dual_3_a}\\
    \mbox{s.t.}~& \|\bm{z}_{ij}\|= 1,
    \\&0\leq \varrho \leq 1.\label{eq:dual_3_c}
\end{align}
\end{subequations}
Notice that the objective function \eqref{eq:dual_3_a} can be expressed as $\varrho h_{ij}(\bm{z}_{ij})$. 
When two ellipsoids $\mathcal{E}_i$ and $\mathcal{E}_j$ have no overlap, there always exists $\bm{z}_{ij}$, namely a supporting hyperplane, realizing $h_{ij}(\bm{z}_{ij})>0$. 
Because the variable $\varrho$ is independent of $\bm{z}_{ij}$, $\varrho=1$ should hold to maximize \eqref{eq:dual_3_a}. 
Substituting $\varrho=1$ into \eqref{eq:dual_3_a} leads to the optimization problem~\eqref{eq:dual}.
This result concludes that the problem \eqref{eq:dual} is the dual of the optimization problem \eqref{eq:primal} if $\mc E_i \cap \mc E_j = \emptyset$.
Furthermore, because the optimization problem \eqref{eq:primal} satisfies the Slater's Condition \cite[Sec. 5.2.3]{convex_optimization}, the solution of \eqref{eq:primal} is equal to the solution of \eqref{eq:dual}, which completes the proof. 

\section{Proof of Lemma~\ref{lem:dot_h}} \label{Ap:dot_h_calc}

This appendix first derives the time derivatives of the proposed CBF in the case of $d=3$, then briefly discusses how $\dot{h}_{ij}$ differs in $d=2$. $h_{ij}(g_i,g_j, \bm{z}_{ij})$ is recalled here for ease of reference.
\begin{align} 
    &h_{ij}(g_i,g_j, \bm{z}_{ij}) =  \nonumber \\ 
    &-\frac{\left\|\bar{Q}_j \bar{Q}_i^{-1} \bm{z}_{ij}\right\|}{\left\|\bar{Q}_i^{-1}\bm{z}_{ij}\right\|} 
    + \frac{(\bm{p}_j-\bm{p}_i)^\top\bar{Q}_i^{-1} \bm{z}_{ij}}{\left\|\bar{Q}_i^{-1}\bm{z}_{ij}\right\|} 
    - \frac{1}{\left\|\bar{Q}_i^{-1}\bm{z}_{ij}\right\|} \label{eqA:CBF_Ap} 
\end{align}
Note that $\bar{Q}_i = R_i Q_i R_i^\top$ is a positive definite matrix since $Q_i$ is a diagonal matrix. 
Also, $\bar{Q}_i^\top \bar{Q}_i = \bar{Q}_i^2 = R_i Q_i^2 R_i^\top$ holds.

From \eqref{eq:dyn_RBM}, the dynamics of $\bm{p}_i$ and $R_i$ are expressed as
\begin{align}
    \dot{\bm{p}}_i = R_i \bm{v}_i, \label{eqA:dyn_pos} \\
    \dot R_i = R_i \hat{\bm{\omega}}_i. \label{eqA:dyn_rot}  
\end{align}
$\bm{z}_{ij}$ specifying a supporting hyperplane is governed by the following dynamics.
\begin{align} \label{eqA:dyn_v} %
    \dot{\bm{z}}_{ij} = \left(I_d-\bm{z}_{ij} \bm{z}_{ij}^\top \right) \bm{u}_{z_{ij}} 
\end{align}
In the case of $d=3$, we use the following properties the operator $\wedge$ and the rotation matrix $R \in SO(3)$ satisfy for any skew symmetric matrix $\hat{\bm{\omega}}$ and $\bm{a},\bm{b} \in \R^3$.
\begin{align}
    \hat{\bm{a}} \bm{b} &= -\hat{\bm{b}} \bm{a} \label{eqA:hat_exchange}\\
    R \hat{\bm{\omega}} R^\top &= (R \bm{\omega})^\wedge  \label{eqA:R_wedge}
\end{align}

We first derive the time derivative of the third term in \eqref{eqA:CBF_Ap}.
\begin{align}
    &\frac{d}{dt}\left( \frac{1}{\left\|\bar{Q}_i^{-1}\bm{z}_{ij}\right\|} \right) = 
    \frac{d}{dt} \left( \frac{1}{\sqrt{ \left(R_i^\top \bm{z}_{ij} \right)^\top Q_i^{-2} \left(R_i^\top \bm{z}_{ij}\right) } } \right) \nonumber \\
    &= -\frac{1}{2\left\|\bar{Q}_i^{-1}\bm{z}_{ij}\right\|^3 } 
    \left(2 \left(R_i^\top \bm{z}_{ij} \right)^\top Q_i^{-2} \frac{d}{dt} \left(R_i^\top \bm{z}_{ij}\right) \right) \nonumber \\
    &= -\frac{1}{\left\|\bar{Q}_i^{-1}\bm{z}_{ij}\right\|^3 } 
    \left( \left(R_i^\top \bm{z}_{ij} \right)^\top Q_i^{-2}  \left(R_i \hat{\bm{\omega}}_i \right)^\top \bm{z}_{ij} \right. \nonumber \\
    &\hspace{5mm} +\left. \left(R_i^\top \bm{z}_{ij} \right)^\top Q_i^{-2} R_i^\top \left(I_d - \bm{z}_{ij} \bm{z}_{ij}^\top \right) \bm{u}_{z_{ij}} \right) \label{eqA:third_1}
\end{align}
The equations \eqref{eqA:dyn_pos} and \eqref{eqA:dyn_rot} were substituted into \eqref{eqA:third_1}.
The first term of \eqref{eqA:third_1} can be expressed as
\begin{subequations} 
\begin{align}
    &\bm{z}_{ij}^\top  R_i Q_i^{-2} \left(R_i \hat{\bm{\omega}}_i \right)^\top \bm{z}_{ij} = 
    -\bm{z}_{ij}^\top R_i Q_i^{-2} \hat{\bm{\omega}}_i R_i^\top \bm{z}_{ij} \label{eqA:third_2_1} \\
    &= -\bm{z}_{ij}^\top R_i Q_i^{-2} R_i^\top R_i \hat{\bm{\omega}}_i R_i^\top \bm{z}_{ij} \label{eqA:third_2_2}\\
    &= -\bm{z}_{ij}^\top \bar{Q}_i^{-2} (R_i \bm{\omega}_i)^\wedge \bm{z}_{ij} \label{eqA:third_2_3}\\
    &= \bm{z}_{ij}^\top \bar{Q}_i^{-2} \hat{\bm{z}}_{ij} R_i \bm{\omega}_i. \label{eqA:third_2_4}
\end{align}
\end{subequations} 
Note that $\hat{\bm{\omega}}_i = -\hat{\bm{\omega}}_i^\top$ is utilized in \eqref{eqA:third_2_1}. 
In \eqref{eqA:third_2_2}, $R_i^\top R_i = I_d$ is substituted. In \eqref{eqA:third_2_3} and \eqref{eqA:third_2_4}, the properties \eqref{eqA:R_wedge} and \eqref{eqA:hat_exchange} are utilized, respectively. By substituting \eqref{eqA:third_2_4} into \eqref{eqA:third_1}, we obtain
\begin{align}
    &\frac{d}{dt}\left( \frac{1}{\left\|\bar{Q}_i^{-1}\bm{z}_{ij}\right\|} \right) \nonumber \\ 
    &= -\frac{\bm{z}_{ij}^\top \bar{Q}_i^{-2} \hat{\bm{z}}_{ij} R_i \bm{\omega}_i +  \bm{z}_{ij}^\top \bar{Q}_{i}^{-2} \left(I_d - \bm{z}_{ij} \bm{z}_{ij}^\top \right) \bm{u}_{z_{ij}}}{\left\|\bar{Q}_i^{-1}\bm{z}_{ij}\right\|^3 }.   \label{eqA:third_fin}
\end{align}

We next derive the time derivative of the second term in \eqref{eqA:CBF_Ap}.
\begin{subequations} \label{eqA:second_1}
\begin{align} 
    &\frac{d}{dt} \left( \frac{(\bm{p}_j-\bm{p}_i)^\top\bar{Q}_i^{-1} \bm{z}_{ij}}{\left\|\bar{Q}_i^{-1}\bm{z}_{ij}\right\|} \right) \nonumber \\
    &= \frac{d}{dt}\left( \frac{1}{\left\|\bar{Q}_i^{-1}\bm{z}_{ij}\right\|} \right) (\bm{p}_j-\bm{p}_i)^\top\bar{Q}_i^{-1} \bm{z}_{ij} \label{eqA:second_1_1} \\
    &+\frac{1}{\left\|\bar{Q}_i^{-1}\bm{z}_{ij}\right\|} \left( \frac{d}{dt}(\bm{p}_j-\bm{p}_i)^\top \right) \bar{Q}_i^{-1} \bm{z}_{ij} \label{eqA:second_1_2} \\
    &+ \frac{1}{\left\|\bar{Q}_i^{-1}\bm{z}_{ij}\right\|} (\bm{p}_j-\bm{p}_i)^\top \frac{d}{dt} \left( R_i Q_i^{-1} R_i^\top \bm{z}_{ij} \right) \label{eqA:second_1_3}
\end{align}
\end{subequations}
The derivative term of \eqref{eqA:second_1_1} has been obtained in \eqref{eqA:third_fin}.
By substituting \eqref{eqA:dyn_pos}, the equation \eqref{eqA:second_1_2} can be expressed as 
\begin{align} \label{eqA:second_2}
    &\frac{1}{\left\|\bar{Q}_i^{-1}\bm{z}_{ij}\right\|} \left( R_j\bm{v}_j - R_i\bm{v}_i \right)^\top \bar{Q}_i^{-1} \bm{z}_{ij} \nonumber \\
    &=\frac{1}{\left\|\bar{Q}_i^{-1}\bm{z}_{ij}\right\|} \bm{z}_{ij}^\top \bar{Q}_i^{-1}  \left( R_j\bm{v}_j - R_i\bm{v}_i \right).
\end{align}
The third term \eqref{eqA:second_1_3} can be calculated as
\begin{subequations}
\begin{align}
    &\frac{1}{\left\|\bar{Q}_i^{-1}\bm{z}_{ij}\right\|} (\bm{p}_j-\bm{p}_i)^\top \frac{d}{dt} \left( R_i Q_i^{-1} R_i^\top \bm{z}_{ij} \right) \label{eqA:second_3_1} \\
    \begin{split}\label{eqA:second_3_2}
    &= \frac{1}{\left\|\bar{Q}_i^{-1}\bm{z}_{ij}\right\|} (\bm{p}_j-\bm{p}_i)^\top 
    \Big( \left( R_i \hat{\bm{\omega}}_i\right) Q_i^{-1} R_i^\top \bm{z}_{ij}\\
    &+\! R_i Q_i^{-1}\! \left( R_i \hat{\bm{\omega}}_i\right)^\top \! \bm{z}_{ij} \!+\! R_i Q_i^{-1} R_i^\top (I_d\!-\!\bm{z}_{ij} \bm{z}_{ij}^\top ) \bm{u}_{z_{ij}} \Big).
    \end{split}
\end{align}
\end{subequations}
The first and the second term in \eqref{eqA:second_3_2} can be expressed as follows, where we omit $\| \bar{Q}_i^{-1}\bm{z}_{ij} \|^{-1}$.
\begin{subequations} \label{eqA:second_4}
\begin{align} 
    &(\bm{p}_j\!-\!\bm{p}_i)^\top R_i \hat{\bm{\omega}}_i  Q_i^{-1} R_i^\top \bm{z}_{ij} \!-\! 
    (\bm{p}_j\!-\!\bm{p}_i)^\top R_i Q_i^{-1} \hat{\bm{\omega}}_i R_i^\top \bm{z}_{ij} \nonumber \\
    \begin{split} \label{eqA:second_4_1}
    &= \bm{z}_{ij}^\top R_i Q_i^{-1}  R_i^\top R_i \hat{\bm{\omega}}_i^\top R_i^\top (\bm{p}_j\!-\!\bm{p}_i) \\
    &\hspace{5mm}- (\bm{p}_j\!-\!\bm{p}_i)^\top R_i Q_i^{-1}  R_i^\top R_i \hat{\bm{\omega}}_i R_i^\top \bm{z}_{ij} 
    \end{split} \\
    \begin{split} \label{eqA:second_4_2}
    &= -\bm{z}_{ij}^\top \bar Q_i^{-1}  \left( R_i \bm{\omega}_i \right)^\wedge (\bm{p}_j\!-\!\bm{p}_i) \\
    &\hspace{5mm}- (\bm{p}_j\!-\!\bm{p}_i)^\top \bar Q_i^{-1}  \left( R_i \bm{\omega}_i \right)^\wedge \bm{z}_{ij} 
    \end{split}\\
    &= \left(\bm{z}_{ij}^\top \bar Q_i^{-1} (\bm{p}_j\!-\!\bm{p}_i)^\wedge
    +(\bm{p}_j\!-\!\bm{p}_i)^\top \bar Q_i^{-1}  \hat{\bm{z}}_{ij} \right) R_i \bm{\omega}_i \label{eqA:second_4_3}
\end{align}
\end{subequations}
Note that we transposed the first term and substituted $R_i^\top R_i$ into both terms in \eqref{eqA:second_4_1}.
In \eqref{eqA:second_4_2}, $\bar Q_i^{-1} = R_i Q_i^{-1} R_i^\top$ and \eqref{eqA:R_wedge} are used. Lastly, the property \eqref{eqA:hat_exchange} is applied in \eqref{eqA:second_4_3}.
By substituting \eqref{eqA:second_2}, \eqref{eqA:second_3_2} and \eqref{eqA:second_4_3} into \eqref{eqA:second_1}, we obtain
\begin{align} 
    &\frac{d}{dt} \left( \frac{(\bm{p}_j-\bm{p}_i)^\top\bar{Q}_i^{-1} \bm{z}_{ij}}{\left\|\bar{Q}_i^{-1}\bm{z}_{ij}\right\|} \right) \nonumber \\
    &= \frac{d}{dt}\left( \frac{1}{\left\|\bar{Q}_i^{-1}\bm{z}_{ij}\right\|} \right) (\bm{p}_j-\bm{p}_i)^\top\bar{Q}_i^{-1} \bm{z}_{ij} \nonumber  \\
    &+\frac{1}{\left\|\bar{Q}_i^{-1}\bm{z}_{ij}\right\|} \bm{z}_{ij}^\top \bar{Q}_i^{-1}  \left( R_j\bm{v}_j - R_i\bm{v}_i \right) \nonumber \\
    &+ \frac{1}{\left\|\bar{Q}_i^{-1}\bm{z}_{ij}\right\|} 
    \Big( (\bm{p}_j\!-\!\bm{p}_i)^\top \bar Q_i^{-1} \left(I_d-\bm{z}_{ij} \bm{z}_{ij}^\top\right) \bm{u}_{z_{ij}} \nonumber \\
    &+ \left(\bm{z}_{ij}^\top \bar Q_i^{-1} (\bm{p}_j\!-\!\bm{p}_i)^\wedge
    +(\bm{p}_j\!-\!\bm{p}_i)^\top \bar Q_i^{-1}  \hat{\bm{z}}_{ij} \right) R_i \bm{\omega}_i \Big) \label{eqA:second_fin},
\end{align}
where $d/dt (\|\bar{Q}_i^{-1}\bm{z}_{ij}\|^{-1} )$ is given from \eqref{eqA:third_fin}.

Finally, we derive the time derivative of the first term in \eqref{eqA:CBF_Ap}. 
Here, we introduce $\sigma:=\left\|\bar{Q}_i^{-1}\bm{z}_{ij}\right\| \left\|\bar{Q}_j \bar{Q}_i^{-1} \bm{z}_{ij} \right\|$ to make the notations simple.
\begin{subequations} \label{first_1}
\begin{align}
    &\frac{d}{dt}\! \left( \frac{ \left\|\bar{Q}_j \bar{Q}_i^{-1} \! \bm{z}_{ij}\right\|}{\left\|\bar{Q}_i^{-1}\bm{z}_{ij}\right\|} \right) 
    \!=\! \left\|\bar{Q}_j \bar{Q}_i^{-1} \! \bm{z}_{ij} \right\| \frac{d}{dt} \!\left( \frac{1}{\left\|\bar{Q}_i^{-1}\bm{z}_{ij}\right\|} \right) \\
    &+ \frac{1}{\sigma}
    \left( \bm{z}_{ij}^\top \bar{Q}_i^{-1} \bar{Q}_j^2\right) \left( \frac{d}{dt}\left( R_i Q_i^{-1}R_i^\top \bm{z}_{ij} \right) \right) \label{eqA:first_1_2} \\
    &+ \frac{1}{2\sigma}
    \left( \bm{z}_{ij}^\top \bar{Q}_i^{-1}\right)  \left( \frac{d}{dt}\left(R_j Q_j^2 R_j^\top \right)\right) \left(\bar{Q}_i^{-1}\bm{z}_{ij}\right) \label{eqA:first_1_3}
\end{align}
\end{subequations}
Because the derivative term of the second term \eqref{eqA:first_1_2} is the same as the one in \eqref{eqA:second_3_1}, the second term \eqref{eqA:first_1_2} can be calculated as follow, where we utilize the same techniques utilized in \eqref{eqA:second_3_2} and \eqref{eqA:second_4}.
\begin{subequations}
\begin{align}
    &\frac{1}{\sigma}
    \left( \bm{z}_{ij}^\top \bar{Q}_i^{-1} \bar{Q}_j^2\right) \left( \frac{d}{dt}\left( R_i Q_i^{-1}R_i^\top \bm{z}_{ij} \right) \right) \\
    \begin{split}
    &= \frac{1}{\sigma} \left( \bm{z}_{ij}^\top \bar{Q}_i^{-1} \bar{Q}_j^2\right) \Big( \left( R_i \hat{\bm{\omega}}_i\right) Q_i^{-1} R_i^\top \bm{z}_{ij} \\
    &+\! R_i Q_i^{-1} \left( R_i \hat{\bm{\omega}}_i\right)^\top \! \bm{z}_{ij} \!+\! R_i Q_i^{-1} R_i^\top (I_d\!-\!\bm{z}_{ij} \bm{z}_{ij}^\top) \bm{u}_{z_{ij}} \Big)
    \end{split} \\
    \begin{split} \label{eqA:first_2}
    &= \frac{1}{\sigma} \left( \bm{z}_{ij}^\top \bar{Q}_i^{-1} \bar{Q}_j^2\right) 
    \left( -\left(\bar{Q}_i^{-1} \hat{\bm{z}}_{ij} \right)^\wedge + \bar{Q}_i^{-1} \hat{\bm{z}}_{ij}  \right) R_i \bm{\omega}_i \\
    &+ \frac{1}{\sigma}\left( \bm{z}_{ij}^\top \bar{Q}_i^{-1} \bar{Q}_j^2\right) 
    \bar{Q}_i^{-1} \left(I_d-\bm{z}_{ij} \bm{z}_{ij}^\top\right) \bm{u}_{z_{ij}}
    \end{split}
\end{align}
\end{subequations}
The third term \eqref{eqA:first_1_3} can be calculated as 
\begin{subequations} 
\begin{align}
    &\frac{1}{\sigma}
    \left( \bm{z}_{ij}^\top \bar{Q}_i^{-1}\right) \left(R_j Q_j^2 \left(R_j \hat{\bm{\omega}}_j \right)^\top \right) \left(\bar{Q}_i^{-1}\bm{z}_{ij}\right)\\
    &=\frac{1}{\sigma}
    \left( \bm{z}_{ij}^\top \bar{Q}_i^{-1}\right) \left(R_j Q_j^2 R_j^\top R_j \hat{\bm{\omega}}_j^\top R_j^\top  \right) \left(\bar{Q}_i^{-1}\bm{z}_{ij}\right) \\
    &=-\frac{1}{\sigma}
    \left( \bm{z}_{ij}^\top \bar{Q}_i^{-1}\right) \bar{Q}_j^2 \left(R_j \bm{\omega}_j \right)^\wedge \left(\bar{Q}_i^{-1}\bm{z}_{ij}\right)\\
    &=\frac{1}{\sigma}
    \left( \bm{z}_{ij}^\top \bar{Q}_i^{-1}\right) \bar{Q}_j^2 \left(\bar{Q}_i^{-1}\bm{z}_{ij}\right)^\wedge  R_j \bm{\omega}_j. \label{eqA:first_3}
\end{align}
\end{subequations}
By substituting \eqref{eqA:first_2} and \eqref{eqA:first_3} into \eqref{first_1}, we obtain
\begin{subequations} \label{eqA:first_fin}
\begin{align}
    &\frac{d}{dt}\! \left( \frac{ \left\|\bar{Q}_j \bar{Q}_i^{-1} \! \bm{z}_{ij}\right\|}{\left\|\bar{Q}_i^{-1}\bm{z}_{ij}\right\|} \right) 
    \!=\! \left\|\bar{Q}_j \bar{Q}_i^{-1} \! \bm{z}_{ij} \right\| \frac{d}{dt} \!\left( \frac{1}{\left\|\bar{Q}_i^{-1}\bm{z}_{ij}\right\|} \right) \\
    &+ \frac{1}{\sigma} \left( \bm{z}_{ij}^\top \bar{Q}_i^{-1} \bar{Q}_j^2\right) 
    \left( -\left(\bar{Q}_i^{-1} \hat{\bm{z}}_{ij} \right)^\wedge + \bar{Q}_i^{-1} \hat{\bm{z}}_{ij}  \right) R_i \bm{\omega}_i \\
    &+ \frac{1}{\sigma}\left( \bm{z}_{ij}^\top \bar{Q}_i^{-1} \bar{Q}_j^2\right) 
    \bar{Q}_i^{-1} \left(I_d-\bm{z}_{ij} \bm{z}_{ij}^\top\right) \bm{u}_{z_{ij}} \\
    &+\frac{1}{\sigma}
    \left( \bm{z}_{ij}^\top \bar{Q}_i^{-1}\right) \bar{Q}_j^2 \left(\bar{Q}_i^{-1}\bm{z}_{ij}\right)^\wedge  R_j \bm{\omega}_j. 
\end{align}
\end{subequations}
The time derivative of the proposed CBF in $d=3$ can be obtained by combining \eqref{eqA:third_fin}, \eqref{eqA:second_fin}, and \eqref{eqA:first_fin} together. Note that $\dot h_{ij}$ presented in \eqref{eq:dot_h} and \eqref{eq:diff_results} divides the terms of $\dot h_{ij}$ so that the coefficients of each control input, namely $\bm{\omega}_i$, $\bm{v}_{i}$, $\bm{u}_{z_{ij}}$, $\bm{\omega}_j$, and $\bm{v}_{j}$, are easily understandable.


Finally, we briefly discuss the time derivative of $h_{ij}$ in the case of $d=2$. Because most of the equations are the same as those of $d=3$ shown above, we only explain what causes the difference between $d=3$ and $d=2$. 

The disagreement of $\dot{h}_{ij}$ between $d=2$ and $d=3$ mainly stems from the operator $\wedge$, which renders a different result, as shown in \eqref{eq:wedge}. As a consequence, with $d=2$, the properties of the operator $\wedge$ are distinct from that of $d=3$ in \eqref{eqA:hat_exchange} and \eqref{eqA:R_wedge}, as shown below with $a\in \R,~\bm{b} \in \R^2$, and $R\in SO(2)$.
\begin{align} 
\hat{a}\bm{b} &= \hat{1}\bm{b} a \label{eqA:hat_exchange_2D} \\
R\hat{\omega}R^\top &= \hat{\omega} \label{eqA:R_wedge_2D}
\end{align}
Hence, $\dot h_{ij}$ in $d=2$ takes a bit different form from \eqref{eq:dot_h}, as shown in \eqref{eq:dot_h_2D}. 
More concretely, the difference between \eqref{eqA:hat_exchange} and \eqref{eqA:hat_exchange_2D} alters $(\hat{\cdot})$ in \eqref{eq:diff_results_omegai} and \eqref{eq:diff_results_z} into $-\hat{1}(\cdot)$ in \eqref{eq:diff_results_omegai_2D} and \eqref{eq:diff_results_omegaj_2D}, where $(\cdot)$ denotes a vector with the length of $d$. 
In addition, because of the difference between \eqref{eqA:R_wedge} and \eqref{eqA:R_wedge_2D}, $R_i$ and $R_j$ preceding $\bm{\omega}_i$ and $\bm{\omega}_j$ in \eqref{eq:dot_h} disappear in \eqref{eq:dot_h_2D}. This completes the proof.

\bibliographystyle{sty/IEEEtran}
\bibliography{biblio}

\end{document}